\newtheorem{thm}{Theorem}[section]
\newtheorem{lem}[thm]{Lemma}
\newtheorem{assumption}[thm]{Assumption}
\newtheorem{pr}[thm]{Proposition}
\newtheorem{definition}[thm]{Definition}
\newtheorem{example}[thm]{Example}
\newenvironment{exmp}{\begin{example}\rm}{\end{example}}
\newtheorem{remark}[thm]{Remark}
\newenvironment{rem}{\begin{remark}\rm}{\end{remark}}
\newtheorem{tab}{Table}
\newcommand{\ord}{{\rm ord}\,}
\def\eps{\varepsilon}
\def\M{{\mathcal M}}
\title{Concavity of Mutual Information Rate for Input-Restricted Finite-State Memoryless
Channels at High SNR}
\author{\begin{tabular}{cc}
Guangyue Han&Brian Marcus\\
University of Hong Kong&University of British Columbia\\
{\em email:} ghan@hku.hk&{\em email:} marcus@math.ubc.ca\\
\end{tabular}}
\date{{\normalsize \today}}
\begin{document}\maketitle\thispagestyle{empty}

\begin{abstract}
We consider a finite-state memoryless channel with i.i.d. channel state and the input Markov process
supported on a mixing finite-type constraint. We discuss the asymptotic behavior of entropy rate of the output hidden Markov chain and deduce that the mutual information rate of such a channel is concave with respect to the parameters of the input Markov processes at high signal-to-noise ratio. In principle, the concavity result enables good numerical approximation of the maximum mutual information rate and capacity of such a channel.
\end{abstract}

\section{Channel Model}

In this paper, we show that for certain input-restricted
finite-state memoryless channels, the mutual information rate, at
high SNR, is effectively a concave function of Markov input
processes of a given order.  While not directly addressed here, the
goal is to help estimate the maximum of this function  and
ultimately  the capacity of such channels (see, for example, the
algorithm of Vontobel, et. al.~\cite{pa04}).

Our  approach depends heavily on results regarding asymptotics and smoothness of entropy rate in special parameterized families of hidden Markov chains, such as those developed in~\cite{jss04},~\cite{or04},~\cite{hm06a},~\cite{hm08}, and
continued here.

We first discuss the nature of the constraints on the input. Let
$\mathcal{X}$ be a finite alphabet. Let $\mathcal{X}^n$ denote																		 
the set of words over $\mathcal{X}$ of length $n$ and let
$\mathcal{X}^{*} = \cup_n \mathcal{X}^n$.  A {\em finite-type}
constraint $\mathcal{S}$ is a subset of $\mathcal{X}^{*}$ defined by
a finite list $\mathcal{F}$ of forbidden words~\cite{lm95, mrs98};
in other words, $\mathcal{S}$ is the set of words over $\mathcal{X}$
that do not contain any element in $\mathcal{F}$ as a contiguous
subsequence. We define $\mathcal{S}_n = \mathcal{S} \cap
\mathcal{X}^n$. The constraint $\mathcal{S}$ is said to be {\em
mixing} if there exists $N$ such that, for any $u, v \in
\mathcal{S}$ and any $n \geq N$, there is a $w \in \mathcal{S}_n$
such that $uwv \in \mathcal{S}$.

In magnetic recording, input sequences are required to satisfy
certain constraints in order to eliminate the most damaging error
events~\cite{mrs98}. The constraints are often mixing finite-type
constraints. The most well-known example is the $(d,k)$-RLL
constraint $\mathcal{S}(d,k)$, which forbids any sequence with fewer
than $d$ or more than $k$ consecutive zeros in between two 1's. For
$\mathcal{S}(d,k)$ with $k < \infty$, a forbidden set $\mathcal{F}$ is:
$$
\mathcal{F}=\{1\underbrace{0\cdots0}_l1: 0 \leq l < d \} \cup \{\underbrace{0\cdots0}_{k+1}\}.
$$
When $k=\infty$, one can choose $\mathcal{F}$ to be
$$
\mathcal{F}=\{1\underbrace{0\cdots0}_l1: 0 \leq l < d \};
$$
in particular when $d=1, k=\infty$, $\mathcal{F}$ can be chosen to be $\{11\}$.

The {\em maximal length} of a forbidden list $\mathcal{F}$ is the
length of the longest word in $\mathcal{F}$.   In general, there can
be many forbidden lists $\mathcal{F}$ which define the same finite
type constraint $\mathcal{S}$. However, we may always choose a list
with smallest maximal length. The {\em (topological) order} of
$\mathcal{S}$ is defined to be $\hat{m}=\hat{m}(\mathcal{S})$ where
$\hat{m}+1$ is the smallest maximal length of any forbidden list
that defines $\mathcal{S}$ (the order of the trivial constraint
$\mathcal{X}^{*}$ is taken to be 0). It is easy to see that the
order of $\mathcal{S}(d,k)$ is $k$ when $k < \infty$, and is $d$ when
$k=\infty$;  $\mathcal{S}(d,k)$ is mixing when $d <k$.

For a stationary stochastic process $X$ over $\mathcal{X}$, the set
of {\em allowed} words with respect to $X$ is defined as
$$
\mathcal{A}(X)=\{w_{-n}^0: n \geq 0, P(X_{-n}^0=w_{-n}^0) > 0\}.
$$
Note that for any $m$-th order stationary Markov process $X$,
the constraint $\mathcal{S} = \mathcal{A}(X)$ is necessarily of
finite-type with order $\hat{m} \le m$, and we say that $X$ is {\em
supported} on $\mathcal{S}$.  Also, $X$ is mixing iff $S$ is mixing
(recall that a Markov chain is mixing if its transition probability matrix, obtained by appropriately enlarging the state space, is irreducible and aperiodic). Note that a Markov chain with support contained in a finite-type
constraint $\mathcal{S}$ may have  order $m < \hat{m}$.

Now, consider a finite-state memoryless channel with finite sets of
channel states $c \in \mathcal{C}$, inputs $x \in \mathcal{X}$,
outputs $z \in \mathcal{Z}$ and input sequences restricted to a
mixing finite-type constraint $\mathcal{S}$. The channel state
process $C$ is assumed to be i.i.d. with $P(C=c)=q_c$. Any stationary input process $X$
must satisfy $\mathcal{A}(X) \subseteq \mathcal{S}$. Let $Z$ denote the stationary
output process corresponding to $X$; then at any time slot, the channel is characterized
by the conditional probability
$$
p(z|x, c) = P(Z=z | X=x, C=c).
$$

We are actually interested in families of channels, as above,
parameterized by $\eps \ge 0$ such that for each $x, c,$ and $z$,
$p(z|x,c)(\eps)$ is an analytic function of $\eps \ge 0$. We
assume that for all $x, c, z$, $p(z|x, c)(\eps)$ is not identically
$0$ as a function of $\eps$, so that for small $\eps >0$, for any input $x$ and channel
state $c$, by analyticity, any output $z$ can occur. We also assume that there is a
one-to-one (not necessarily onto) mapping from $\mathcal{X}$ into $\mathcal{Z}$, $z  =
z(x)$, such that for all $c$ and $x$, $p(z(x)|x,c)(0) = 1$; so,
$\eps$ can be regarded as noise, and $z(x)$ is the noiseless output
corresponding to input $x$.  Note that the output process $Z = Z(X,
\eps)$ depends on the input process $X$ and the parameter value $\eps$; we will often
suppress the notational dependence on $\eps$ or $X$, when it is clear from context.

Prominent examples of such families include input-restricted
versions of the binary symmetric channel with crossover probability $\eps$ (denoted by BSC($\eps)$), the binary
erasure channel with erasure rate $\eps$ (denoted by BEC($\eps)$), and some special Gilbert-Elliott Channels,
where the channel state process is a 2-state i.i.d. process, with one state acting  as BSC($\eps)$ and the other state acting as BSC($k\eps)$ for some fixed $k$; see Section 3 of~\cite{hm08}.

Recall that the {\em entropy rate} of $Z=Z(X,\eps)$ is, as usual, defined as
$$
H(Z)=\lim_{n \to \infty} H_n(Z),
$$
where
$$
H_n(Z)= H(Z_0|Z_{-n}^{-1}) = \sum_{z_{-n}^0} -p(z_{-n}^0) \log
p(z_0|z_{-n}^{-1}).
$$
The {\em mutual information rate} between $Z$ and $X$ can be defined as
$$
I(Z;X)=\lim_{n \to \infty} I_n(Z;X),
$$
where
$$
I_n(Z;X) = H_n(Z) - \frac{1}{n+1}H(Z_{-n}^0|X_{-n}^{0}).
$$
Given the memoryless assumption, one can check that the second term
above is simply $H(Z_0|X_0)$ and in particular does not depend on
$n$.

Under our assumptions, if $X$ is a Markov chain, then for each $\eps
\ge 0$, the output process $Z=Z(X,\eps)$ is a {\em hidden Markov
chain} and in fact satisfies the ``weak Black Hole'' assumption
of~\cite{hm08}, where an asymptotic formula for $H(Z)$ is developed;
the asymptotics are given as an expansion in $\eps$ around $\eps
=0$. In section~\ref{asymptotics}, we further develop these ideas to establish
smoothness properties of $H(Z)$ as a function of $\eps$ and the
Markov chain input $X$ of a fixed order. In particular, we show that
$H(Z)$ can be expressed as $G(X,\eps) + F(X,\eps) \log (\eps)$,
where $G(X,\eps)$ and $F(X,\eps)$ are smooth (i.e., infinitely differentiable) functions of $\eps$
near $0$ for any first order $X$ supported on $\mathcal{S}$ (in fact, $F(X,\eps)$ will
be analytic); the $\log(\eps)$ term arises from the fact that the
support of $X$ will be contained in a non-trivial finite-type
constraint and so $X$ will necessarily have some zero transition
probabilities; this prevents $H(Z)$ from being smooth in $\eps$ at
$0$.

In Section~\ref{CMI}, we apply the smoothness results to show that for a
mixing finite-type constraint $\mathcal{S}$ of order $1$, and sufficiently
small $\eps_0>0$, for each $0 \le \eps \le \eps_0$, $I_n(Z(\eps,X);X)$ and $I(Z(X,\eps);X)$ are
strictly concave on the set of all first order $X$ whose non-zero transition probabilities are not ``too small''.
This will imply that there are unique first order Markov chains
$X_n = X_n(\eps), X_\infty=X_\infty (\eps)$ such that $X_n$
maximizes $I_n(Z(X,\eps),X)$ and $X_\infty$ maximizes
$I(Z(X,\eps),X)$. It will also follow that $X_n(\eps)$ converges
exponentially to $X_\infty(\eps)$ uniformly over $0 \le \eps
\le \eps_0$. In principle, the concavity result enables (via any convex optimization algorithm) good numerical approximation of $X_n(\eps)$ and $X_\infty(\eps)$ and therefore the maximum mutual
information rate over first order $X$. This can be generalized to
$m$-th order Markov chains, and as $m \rightarrow \infty$,
this maximum converges to channel capacity; furthermore it can be generalized to higher
order constraints.

\section{Asymptotics of Entropy Rate}
\label{asymptotics}

\subsection{Key ideas and lemmas}

For simplicity, we consider only mixing finite-type constraints $S$ of
order $1$, and correspondingly only first order
input Markov processes $X$ such that $\mathcal{A}(X) \subseteq S$
(the higher order case is easily reduced to this).
For such $X$ with transition probability matrix $\Pi$,
$(X, C)$ is also a first order Markov chain, with transition probability matrix:
$$
\Omega((x, c), (y, d))=\Pi_{x, y} q_d.
$$
For any $z \in \mathcal{Z}$, define
\begin{equation} \label{omega-z}
\Omega_z((x, c), (y, d))=\Pi_{x, y} q_d p(z|y, d).
\end{equation}
Note that $\Omega_z$ implicitly depends on $\eps$ through $p(z|y, d)$.
One checks that
$$
\sum_{z \in \mathcal{Z}} \Omega_z =\Omega,
$$
and
\begin{equation} \label{p-z}
p(z_{-n}^0)=\pi \Omega_{z_{-n}} \Omega_{z_{-n+1}} \cdots \Omega_{z_{0}} \mathbf{1},
\end{equation}
where $\pi$ is the stationary vector of $\Omega$ and $\mathbf{1}$ is the all $1$'s column vector.

For a given analytic function $f(\eps)$ around $\eps=0$, let $\ord (f(\eps))$ denote its order with respect to $\eps$, i.e., the degree of the first non-zero term of its Taylor series expansion around $\eps=0$. Thus, the orders $\ord(p(z|x,c))$ determine the orders
$\ord(p(z_{-n}^0))$ and similarly orders of conditional probabilities $\ord(p(z_0|z_{-n}^{-1}))$.

\begin{exmp} \label{rll}
Consider a binary symmetric channel with crossover probability $\eps$ and a binary input Markov chain $X$ supported on the $(1, \infty)$-RLL constraint with transition probability matrix
$$
\Pi=\left[\begin{array}{cc}
1-p&p\\
1&0\\
\end{array}\right],
$$
where $0 < p < 1$. Here there is only one channel state, and so we can suppress dependence on the channel state.
The channel is characterized by the conditional probability
$$
p(z|x) = p(z|x)(\eps)=
\left\{
\begin{array}{cc}
1-\eps & \mbox{ if } z = x \\
\eps & \mbox{ if } z \ne x
\end{array}
\right.
$$
Let $Z$ be the corresponding output binary hidden Markov chain.
Now we have
$$
\Omega_0=\left[\begin{array}{cc}
(1-p)(1-\eps)&p \eps\\
1-\eps&0\\
\end{array}\right], \Omega_1=\left[\begin{array}{cc}
(1-p) \eps&p (1-\eps)\\
\eps&0\\
\end{array}\right].
$$
The stationary vector $\pi=(1/(p+1),p/(p+1))$, and one computes, for instance,
$$
p(z_{-2} z_{-1} z_0 = 110) =
\pi \Omega_1 \Omega_1 \Omega_0 \mathbf{1} = \frac{2p - p^2}{1+p}\eps +O(\eps^2),
$$
which has order 1.
\end{exmp}

Let $\mathcal{M}$ denote the set of all first order stationary Markov chains $X$ satisfying $\mathcal{A}(X) \subseteq \mathcal{S}$. Let $\mathcal{M}_{\delta}$, $\delta \geq 0$, denote the set of all $X \in \mathcal{M}$ such that $p(w_{-1}^0) > \delta$ for all $w_{-1}^0 \in \mathcal{S}_{2}$. Note that whenever $X \in  \mathcal{M}_{0}$, i.e., $\mathcal{A}(X)=\mathcal{S}$, $X$ is mixing (thus its transition probability matrix $\Pi$ is primitive) since $S$ is mixing, so $X$ is completely determined by its transition probability matrix $\Pi$. For the purpose of this paper, however, we find it convenient to identify each $X \in \mathcal{M}_0$ with its vector of {\em joint} probabilities $\vec{p}=\vec{p}_X$ on words of length 2 instead:
$$
\vec{p}=\vec{p}_X=(P(X_{-1}^0=w_{-1}^0): w_{-1}^0 \in \mathcal{S}_{2});
$$
sometimes we write $X = X(\vec{p})$.

In the following, for any parameterized sequence of functions $f_{n, \lambda}(\eps)$ ($\eps$ is real or complex), we use
$$
f_{n, \lambda}(\eps) = \hat{O}(\eps^n) \mbox{ on } \Lambda
$$
to mean that there exist constants $C, \beta_1, \beta_2 > 0$, $\eps_0 > 0$ such that for all $n$, all $\lambda \in \Lambda$ and all $0 \leq |\eps| \leq \eps_0$,
$$
|f_{n, \lambda}(\eps)| \leq n ^{\beta_1} (C |\eps|^{\beta_2})^n.
$$
Note that $f_{n, \lambda}(\eps) =\hat{O}(\eps^n)$ on $\Lambda$ implies that there exists $\eps_0 > 0$ and $0 < \rho < 1$ such that $|f_{n, \lambda}(\eps)| < \rho^n$ for all $|\eps| \leq \eps_0$, all $\lambda \in \Lambda$ and large enough $n$.
One also checks that a $\hat{O}(\eps^n)$-term is unaffected by multiplication of an exponential function (thus polynomial function) in $n$ and a polynomial function in $1/\eps$;
\begin{rem} \label{TheSameEps}
For any given $f_{n, \lambda}(\eps)= \hat{O}(\eps^n)$, there exists $\eps_0 > 0$ and $0 < \rho < 1$ such that $|g_1(n) g_2(1/\eps) f_{n, \lambda}(\eps)| \leq \rho^n$, for all $|\eps| \leq \eps_0$, all $\lambda \in \Lambda$, all polynomial functions $g_1(n), g_2(1/\eps)$ and large enough $n$.
\end{rem}

Of course, the output joint probabilities $p(z_{-n}^0)$ and conditional probabilities $p(z_0|z_{-n}^{-1})$
implicitly depend on $\vec{p} \in \mathcal{M}_0$ and $\eps$. The following result asserts that for small $\eps$,
the total probability of output sequences with ``large'' order is exponentially small, uniformly over all input processes.

\begin{lem}   \label{MATH2603}
For any fixed $0 < \alpha < 1$,
$$
\sum_{\ord(p(z_{-n}^{-1})) \geq \alpha n} p(z_{-n}^{-1}) = \hat{O}(\eps^n) \mbox{ on } \mathcal{M}_0.
$$
\end{lem}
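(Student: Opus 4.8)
The plan is to reduce everything to a counting/combinatorial statement about how many symbol positions in a length-$n$ output word $z_{-n}^{-1}$ can "cost" a positive power of $\eps$, and then to show that each such costly position carries a uniform $O(\eps)$ factor in probability. First I would exploit the factorization~(\ref{p-z}): $p(z_{-n}^{-1})=\pi\,\Omega_{z_{-n}}\cdots\Omega_{z_{-1}}\mathbf{1}$, and note that each entry of $\Omega_z$ is of the form $\Pi_{x,y}q_d\,p(z|y,d)$, so that $\ord$ of a nonzero entry equals $\ord(p(z|y,d))$. Since, by hypothesis, $p(z(y)|y,d)(0)=1$ for the noiseless output $z(y)$, the only entries of order $0$ are those with $z=z(y)$; every transition producing a ``wrong'' symbol contributes order $\ge 1$. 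Thus, up to the contribution of $\pi$ and $\mathbf 1$ (which are bounded above, and bounded below on $\M_0$ only in the weak sense we actually need, since we are summing probabilities), $\ord(p(z_{-n}^{-1}))$ is at least the number of indices $i$ at which the optimal (order-minimizing) state path through $\Omega_{z_{-n}}\cdots\Omega_{z_{-1}}$ is forced off the ``noiseless'' diagonal — call this the \emph{deviation count} $k(z_{-n}^{-1})$. Making this precise is the one point that needs care: because $\Pi$ may itself have zero entries (the finite-type constraint), a given output word may admit several state paths, and I need $\ord(p(z_{-n}^{-1})) \ge c\cdot k$ where $k$ counts positions that are costly along \emph{every} admissible path. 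The clean way is to observe that expanding the matrix product gives $p(z_{-n}^{-1})$ as a sum of path-weights, each a product of $\Pi$-entries, $q$-values, and $p(z|y,d)$-factors; the order of the whole sum is the minimum over admissible paths of the sum of the factor-orders, and each path visiting output symbol $z_i$ from a state with noiseless output $\ne z_i$ contributes $\ge 1$ from that factor.

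Next I would group output words by deviation count. For a fixed pattern $T\subseteq\{-n,\dots,-1\}$ of "deviation" positions with $|T|=k$, the total probability $\sum p(z_{-n}^{-1})$ over words whose costly positions are exactly $T$ is at most $C_1^{\,k}\,\eps^{k}\cdot(\text{something bounded})$: at each position outside $T$ the symbol is essentially forced (up to an $O(\eps)$ correction which only helps) and the transition matrix restricted to ``noiseless'' moves is substochastic, while at each of the $k$ positions in $T$ there are at most $|\mathcal X|\cdot|\mathcal C|$ choices each contributing a factor $O(\eps)$. Summing over the $\binom{n}{k}$ choices of $T$ and over $|\mathcal X|^{k}$-ish symbol choices within $T$, the total probability of output words with deviation count exactly $k$ is bounded by $\binom{n}{k}(C\eps)^{k}$. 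Therefore
$$
\sum_{\ord(p(z_{-n}^{-1})) \geq \alpha n} p(z_{-n}^{-1}) \;\le\; \sum_{k \geq \alpha n} \binom{n}{k}\,(C\eps)^{k}.
$$
A routine tail estimate on the binomial (e.g.\ $\binom{n}{k}\le 2^{n}$, or more carefully $\binom{n}{k}\le (en/k)^{k}$ with $k\ge\alpha n$) shows that for $\eps$ small enough this is bounded by $n^{\beta_1}(C'\eps^{\beta_2})^{n}$ with, say, $\beta_2=\alpha$ and $C'$ absorbing the constants — exactly the $\hat O(\eps^n)$ bound on $\M_0$, and the constants are manifestly uniform in $\vec p\in\M_0$ since $\pi$, $\mathbf 1$, and the substochasticity of the noiseless part of $\Omega$ are uniformly controlled there.

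The main obstacle, as flagged above, is the comparison $\ord(p(z_{-n}^{-1})) \ge c\cdot k(z_{-n}^{-1})$: one must be sure that cancellation among path-weights cannot lower the order, and that the finite-type zeros of $\Pi$ don't create output words of small order but large deviation count along some paths while other paths are blocked. I would handle this by the standard fact that a sum of monomials in $\eps$ with \emph{nonnegative} coefficients has order equal to the minimum of the orders of its terms (no cancellation possible), which applies here because every path-weight is a product of nonnegative quantities; the finite-type structure only \emph{removes} paths, which can only \emph{raise} the order, so it is harmless. A secondary technical point is uniformity of the lower-order bookkeeping over $\M_0$ — i.e.\ that the ``noiseless'' restriction $\Omega^{(0)}$ of $\Omega$ (entries with $z=z(y)$, evaluated at $\eps=0$) is substochastic with a spectral radius bounded below $1$ after a bounded number of steps; this follows from mixing of $\S$ (hence primitivity of $\Pi$) together with the fact that $Z$ genuinely loses information, and it is uniform on $\M_0$ because it depends only on the support pattern, which is fixed on $\M_0$. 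With these two points in hand, the counting argument and the binomial tail estimate finish the proof.
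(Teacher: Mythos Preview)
Your large-deviations instinct is right, but the execution is considerably more involved than necessary, and two of the steps you flag do not actually go through as written. The paper's argument is essentially three lines. Write $p(z_{-n}^{-1}) = \sum_{(x,c)} p(x_{-n}^{-1}, c_{-n}^{-1}) \prod_i p(z_i\mid x_i,c_i)$; since all summands are nonnegative (the no-cancellation point you already identified), the condition $\ord(p(z_{-n}^{-1})) \geq \alpha n$ forces $\sum_i \ord(p(z_i\mid x_i,c_i)) \geq \alpha n$ along \emph{every} admissible path $(x,c)$, so each product is at most $(C\eps)^{\alpha n}$ for a suitable constant and small $\eps$. Summing over $(x,c)$ (total mass $1$) gives $p(z_{-n}^{-1}) \leq (C\eps)^{\alpha n}$, and then the crude count $|\mathcal Z|^n$ of output words yields $\sum_{\ord \geq \alpha n} p(z_{-n}^{-1}) \leq |\mathcal Z|^n (C\eps)^{\alpha n}$, which is $\hat O(\eps^n)$ on $\mathcal M_0$. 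No deviation count, no binomial tail, no spectral radius.

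Two concrete problems in your route. First, the inequality you isolate, $\ord(p(z_{-n}^{-1})) \geq c\cdot k$, points the wrong way for the displayed conclusion $\sum_{\ord \geq \alpha n} p \leq \sum_{k \geq \alpha n} \binom{n}{k}(C\eps)^k$: to get the inclusion $\{\ord \geq \alpha n\}\subseteq\{k \geq \alpha n\}$ you need a \emph{lower} bound on $k$ in terms of $\ord$ (namely $k \geq \ord/O_M$), not the upper bound you state. Second, the claim that the noiseless restriction $\Omega^{(0)}$ has spectral radius uniformly below $1$ on $\mathcal M_0$ is false: at $\eps=0$ the channel is deterministic, so the noiseless part of $\Omega$ is all of $\Omega$, which is stochastic; and even the single-symbol matrices $\Omega_z(0)$ have spectral radii equal to diagonal entries of $\Pi$, which are not bounded away from $1$ on $\mathcal M_0$ (e.g.\ $\Pi_{00}=1-p$ in Example~\ref{rll}). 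Fortunately this spectral input is unnecessary --- the ``something bounded'' factor is just $\sum_{(x,c)} p(x,c)=1$. If you want to keep the binomial route, work on the joint process: set $N=\#\{i: Z_i\neq z(X_i)\}$, note that given $(X,C)$ the indicators are independent with success probability $\leq C\eps$, observe that $\{\ord(p(Z_{-n}^{-1}))\geq \alpha n\}\subseteq\{N\geq \alpha n/O_M\}$ (the order bound holds along the realized path too), and apply a union bound.
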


\begin{proof}
Note that for any hidden Markov chain sequence $z_{-n}^{-1}$, we have
\begin{equation}  \label{40minutes}
p(z_{-n}^{-1})=\sum p(x_{-n}^{-1}, c_{-n}^{-1}) \prod_{i=-n}^{-1} p(z_i|x_i, c_i),
\end{equation}
where the summation is over all $(x_{-n}^{-1}, c_{-n}^{-1})$. Now consider $z_{-n}^{-1}$ with $k=\ord(p(z_{-n}^{-1})) \geq \alpha n$. One checks
that for $\eps$ small enough there exists a positive constant $C$ such that $p(z|x, c) \leq C \eps$ for $(x, c, z)$ with $\ord(p(z|x, c)) \geq 1$, and thus the term $\prod_{i=-n}^{-1} p(z_i|x_i, c_i)$ as in (\ref{40minutes}) is upper bounded by $C^{k} \eps^{k}$, which is upper bounded by $C^{\alpha n} \eps^{\alpha n}$ for $\eps < 1/C$. Noticing that $\sum_{x_{-n}^{-1}, c_{-n}^{-1}} p(x_{-n}^{-1}, c_{-n}^{-1})=1$, we then have, for $\eps$ small enough,
$$
\sum_{\ord(p(z_{-n}^{-1})) \geq \alpha n} p(z_{-n}^{-1}) \leq \sum_{z_{-n}^{-1}} \sum_{x_{-n}^{-1}, c_{-n}^{-1}} p(x_{-n}^{-1}, c_{-n}^{-1}) C^{\alpha n} \eps^{\alpha n} \leq |\mathcal{Z}|^n C^{\alpha n} \eps^{\alpha n},
$$
which immediately implies the lemma.
\end{proof}

Now for any $\delta > 0$, consider a first order Markov chain $X \in \mathcal{M}_{\delta}$ with transition probability matrix $\Pi$ (note that $X$ is necessarily mixing). Let $\Pi^{\mathbb{C}}$ denote a complex ``transition probability matrix'' obtained by perturbing all entries of $\Pi$ to complex numbers, while satisfying $\sum_y \Pi^{\mathbb{C}}_{xy}=1$. Then through solving the following system of equations
$$
\pi^{\mathbb{C}} \Pi^{\mathbb{C}}=\pi^{\mathbb{C}}, \qquad \sum_y \pi^{\mathbb{C}}=1,
$$
one can obtain a complex ``stationary probability'' $\pi^{\mathbb{C}}$, which is uniquely defined if the perturbation of $\Pi$ is small enough. It then follows that under a complex perturbation of $\Pi$, for any Markov chain sequence $x_{-n}^0$, one can obtain a complex version of $p(x_{-n}^0)$ through complexifying all terms in the following expression:
$$
p(x_{-n}^0)=\pi_{x_{-n}} \Pi_{x_{-n}, x_{-n+1}} \cdots \Pi_{x_{-1}, x_0},
$$
namely,
$$
p^{\mathbb{C}}(x_{-n}^0)=\pi^{\mathbb{C}}_{x_{-n}} \Pi^{\mathbb{C}}_{x_{-n}, x_{-n+1}} \cdots \Pi^{\mathbb{C}}_{x_{-1}, x_0};
$$
in particular, the joint probability vector $\vec{p}$ can be complexified to $\vec{p}^{\mathbb{C}}$ as well. We then use $\mathcal{M}_{\delta}^{\mathbb{C}}(\eta)$, $\eta > 0$, to denote the $\eta$-perturbed complex version of $\mathcal{M}_{\delta}$; more precisely,
$$
\mathcal{M}_{\delta}^{\mathbb{C}}(\eta)=\{(\vec{p}^{\mathbb{C}}(w_{-1}^0): w_{-1}^0 \in \mathcal{S}_2)| \; \|\vec{p}^{\mathbb{C}} - \vec{p}\| \leq \eta \mbox{ for some } \vec{p} \in \mathcal{M}_{\delta}\},
$$
which is well-defined if $\eta$ is small enough. Furthermore, together with a small complex perturbation of $\eps$, one can obtain a well-defined complex version $p^{\mathbb{C}}(z_{-n}^0)$ of $p(z_{-n}^0)$ through complexifying (\ref{omega-z}) and (\ref{p-z}).

Using the same argument as in Lemma~\ref{MATH2603} and applying the triangle inequality to the absolute value of (\ref{40minutes}), we have
\begin{lem} \label{complex}
For any $\delta > 0$, there exists $\eta > 0$ such that for any fixed $0 < \alpha < 1$,
$$
\sum_{\ord(p^{\mathbb{C}}(z_{-n}^{-1})) \geq \alpha n} |p^{\mathbb{C}}(z_{-n}^{-1})| = \hat{O}(|\eps|^n) \mbox{ on } \mathcal{M}_{\delta}^{\mathbb{C}}(\eta).
$$
\end{lem}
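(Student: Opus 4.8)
The plan is to run the proof of Lemma~\ref{MATH2603} again after complexifying everything, the one substantive change being that the exact identity $\sum_{x_{-n}^{-1},c_{-n}^{-1}}p(x_{-n}^{-1},c_{-n}^{-1})=1$ used there must be replaced by a bound $\sum_{x_{-n}^{-1},c_{-n}^{-1}}|p^{\mathbb{C}}(x_{-n}^{-1},c_{-n}^{-1})|\leq K^{n}$, valid uniformly over $\mathcal{M}_{\delta}^{\mathbb{C}}(\eta)$ for a suitable constant $K$; this still suffices because the factor $K^{n}$ is overwhelmed by the $|\eps|^{\alpha n}$ decay once $|\eps|$ is small.

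First I would write down the complexification of~(\ref{40minutes}). Since $\Pi^{\mathbb{C}}$ is a perturbation of $\Pi$ that leaves its forbidden entries at $0$ (the perturbed vector $\vec{p}^{\mathbb{C}}$ being supported on $\mathcal{S}_{2}$), it has the same zero pattern as $\Pi$; hence $p^{\mathbb{C}}(x_{-n}^{-1},c_{-n}^{-1})=0$ unless $x_{-n}^{-1}\in\mathcal{S}_{n}$, and, writing $\pi^{\mathbb{C}}$ for the complexified stationary vector of $\Pi^{\mathbb{C}}$,
$$
p^{\mathbb{C}}(z_{-n}^{-1})=\sum_{x_{-n}^{-1}\in\mathcal{S}_{n},\ c_{-n}^{-1}}p^{\mathbb{C}}(x_{-n}^{-1},c_{-n}^{-1})\prod_{i=-n}^{-1}p(z_{i}|x_{i},c_{i})(\eps),\qquad p^{\mathbb{C}}(x_{-n}^{-1},c_{-n}^{-1})=\pi^{\mathbb{C}}_{x_{-n}}\Big(\prod_{i=-n}^{-1}q_{c_{i}}\Big)\prod_{i=-n}^{-2}\Pi^{\mathbb{C}}_{x_{i},x_{i+1}}.
$$
Taking absolute values and applying the triangle inequality,
$$
|p^{\mathbb{C}}(z_{-n}^{-1})|\leq\sum_{x_{-n}^{-1}\in\mathcal{S}_{n},\ c_{-n}^{-1}}|p^{\mathbb{C}}(x_{-n}^{-1},c_{-n}^{-1})|\prod_{i=-n}^{-1}|p(z_{i}|x_{i},c_{i})(\eps)|.
$$

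Next I would bound the two kinds of factors uniformly. By analyticity, $p(z|x,c)(\eps)$ equals $\eps^{\ord(p(z|x,c))}$ times a bounded analytic function near $0$, so there is a constant $C_{0}\geq1$, depending only on the channel, with $|p(z|x,c)(\eps)|\leq C_{0}|\eps|^{\ord(p(z|x,c))}$ for all $x,c,z$ and all $|\eps|\leq\eps_{0}$. As in Lemma~\ref{MATH2603}, the order appearing in the statement is the structural quantity $\ord(p^{\mathbb{C}}(z_{-n}^{-1}))=\min\big\{\sum_{i}\ord(p(z_{i}|x_{i},c_{i})):x_{-n}^{-1}\in\mathcal{S}_{n},\ c_{-n}^{-1}\in\mathcal{C}^{n}\big\}$, which for real $X\in\mathcal{M}_{0}$ is genuinely $\ord(p(z_{-n}^{-1}))$ because all summands in~(\ref{40minutes}) are then nonnegative, so no cancellation occurs; hence $\ord(p^{\mathbb{C}}(z_{-n}^{-1}))\geq\alpha n$ forces $\sum_{i}\ord(p(z_{i}|x_{i},c_{i}))\geq\alpha n$ along every admissible path, and therefore $\prod_{i=-n}^{-1}|p(z_{i}|x_{i},c_{i})(\eps)|\leq C_{0}^{n}|\eps|^{\alpha n}$ as soon as $|\eps|\leq1$. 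For the input factor I would fix $\eta>0$ (depending only on $\delta$) small enough that $\mathcal{M}_{\delta}^{\mathbb{C}}(\eta)$ and $\pi^{\mathbb{C}}$ are well-defined, with $\max_{x}\sum_{y}|\Pi^{\mathbb{C}}_{xy}|\leq1+c_{\delta}\eta$ (because the real $\Pi$ is stochastic and $\vec{p}\mapsto\Pi$ is Lipschitz on $\{\vec{p}(w)\geq\delta\}$) and $\|\pi^{\mathbb{C}}\|_{1}$ bounded (because $\pi^{\mathbb{C}}$ is an analytic function of $\Pi^{\mathbb{C}}$ near $\Pi$, every such $\Pi$ being primitive, uniformly over the compact set $\mathcal{M}_{\delta}$). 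Then, using $\sum_{c_{-n}^{-1}}\prod_{i}q_{c_{i}}=1$ and summing out $x_{-1},x_{-2},\ldots$ in turn---each summation contributing a factor $\max_{x}\sum_{y}|\Pi^{\mathbb{C}}_{xy}|$---one gets $\sum_{x_{-n}^{-1},c_{-n}^{-1}}|p^{\mathbb{C}}(x_{-n}^{-1},c_{-n}^{-1})|\leq\|\pi^{\mathbb{C}}\|_{1}(1+c_{\delta}\eta)^{n-1}$, an exponential bound in $n$.

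Putting these together, and using that there are at most $|\mathcal{Z}|^{n}$ words $z_{-n}^{-1}$,
$$
\sum_{\ord(p^{\mathbb{C}}(z_{-n}^{-1}))\geq\alpha n}|p^{\mathbb{C}}(z_{-n}^{-1})|\leq|\mathcal{Z}|^{n}\,\|\pi^{\mathbb{C}}\|_{1}(1+c_{\delta}\eta)^{n-1}\,C_{0}^{n}\,|\eps|^{\alpha n},
$$
which is $\hat{O}(|\eps|^{n})$ on $\mathcal{M}_{\delta}^{\mathbb{C}}(\eta)$ (the constant $\beta_{2}$ in the definition of $\hat{O}$ being $\alpha$) as soon as $\eps_{0}$ is taken small enough; note that $\eta$ was fixed before $\alpha$, as the statement demands. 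The step I expect to be the main obstacle is precisely this uniform bound on $\sum_{x,c}|p^{\mathbb{C}}(x,c)|$: in the real case it equals $1$, and what makes the complexified version go through is the combination of (i) $\Pi^{\mathbb{C}}$ inheriting the zero pattern of $\Pi$, so that only admissible paths contribute and the order-$0$ contributions of forbidden paths never intervene; (ii) stochasticity of the real $\Pi$, which controls the row sums of $|\Pi^{\mathbb{C}}|$; and (iii) compactness of $\mathcal{M}_{\delta}$, which makes uniform all the ``small enough perturbation'' requirements (existence and $\ell^{1}$-boundedness of $\pi^{\mathbb{C}}$, the Lipschitz constant of $\vec{p}\mapsto\Pi$, uniform primitivity). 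Everything else is a verbatim complexification of the argument for Lemma~\ref{MATH2603}.
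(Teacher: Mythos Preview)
Your proposal is correct and follows essentially the same approach as the paper, which simply says to repeat the argument of Lemma~\ref{MATH2603} after applying the triangle inequality to the absolute value of~(\ref{40minutes}). You have supplied the details the paper omits---in particular the uniform exponential bound on $\sum_{x,c}|p^{\mathbb{C}}(x_{-n}^{-1},c_{-n}^{-1})|$ replacing the exact identity $\sum p(x_{-n}^{-1},c_{-n}^{-1})=1$, and the observation that the summation index is governed by the structural order (independent of $\vec{p}$)---but the route is the same.
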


By Lemma~\ref{MATH2603} and Lemma~\ref{complex}, we can focus our attention on output sequences with relatively small order. For a fixed positive $\alpha$, a sequence $z_{-n}^{-1} \in \mathcal{Z}^n$ is said to be {\em $\alpha$-typical} if $\ord(p(z_{-n}^{-1})) \leq \alpha n$; let $T_{n}^{\alpha}$ denote the set of all $\alpha$-typical $\mathcal{Z}$-sequences with length $n$.  Note that this definition is independent of $\vec{p} \in \mathcal{M}_0$.

For a smooth mapping  $f(\vec{x})$ from $\mathbb{R}^k$ to $\mathbb{R}$
and a nonnegative integer $\ell$, $D^\ell_{\vec{x}}f$ denotes the $\ell$-th total
derivative  with respect to $\vec{x}$; for instance,
$$
D_{\vec{x}}f=\left(\frac{\partial f}{\partial x_i}\right)_{i} \mbox{ and } D^2_{\vec{x}}f=\left(\frac{\partial^2 f}{\partial x_i \partial x_j}\right)_{i, j}.
$$
In particular, if $\vec{x} = \vec{p} \in \mathcal{M}_0$ or $\vec{x} = (\vec{p},\eps) \in \mathcal{M}_0 \times [0,1]$, this defines the derivatives $D_{\vec{p}}^{l}p(z_0|z_{-n}^{-1})$ or $D_{\vec{p},\eps}^{l}p(z_0|z_{-n}^{-1})$.
We shall use $| \cdot |$ to denote the Euclidean norm (of a vector or a matrix), and we shall use $\| A \|$ to denote the norm of a matrix $A$ as a linear map under the Euclidean norm, i.e.,
$$
\|A\|=\sup_{x \neq \vec{0}} \frac{|Ax|}{|x|}.
$$
It is well known that $\| A \| \leq |A|$.

In this paper, we are interested in functions of $\vec{q}=(\vec{p}, \eps)$. For any smooth function $f$ of $\vec{q}$ and
$\vec{n}=(n_1, n_2, \cdots, n_{|\mathcal{S}_2|+1}) \in \mathbb{Z}_+^{|\mathcal{S}_2|+1}$, define
$$
f^{(\vec{n})}=\frac{\partial^{|\vec{n}|} f}{\partial q_1^{n_1}
\partial q_2^{n_2} \cdots \partial q_{|\mathcal{S}_2|+1}^{n_{|\mathcal{S}_2|+1}}},
$$
here $|\vec{n}|$ denotes the order of the $\vec{n}$-th derivative of
$f$ with respect to $\vec{q}$, and is defined as
$$
|\vec{n}|=n_1+n_2+\cdots+n_{|\mathcal{S}_2|+1}.
$$

The next result shows, in a precise form, that for $\alpha$-typical sequences  $z_{-n}^0$, the  derivatives, of all orders, of the difference between $p(z_0|z_{-n}^{-1})$ and $p(z_0|z_{-n-1}^{-1})$ converge exponentially in $n$, uniformly in $\vec{p}$ and $\eps$. For $n \leq m , \hat{m} \leq 2n$, define
$$
T_{n, m, \hat{m}}^{\alpha}=\{(z_{-m}^0, \hat{z}_{-\hat{m}}^0) \in \mathcal{Z}^{m+1} \times \mathcal{Z}^{\hat{m}+1}| z_{-n}^{-1}=\hat{z}_{-n}^{-1} \mbox{ is } \mbox{$\alpha$-typical}\}.
$$

\begin{pr} \label{overwhelm}
Assume $n \leq m , \hat{m} \leq 2n$. Given $\delta_0 >0$, there exists $\alpha > 0$ such that for any $\ell$
$$
|D_{\vec{p},\eps}^{\ell}p(z_0|z_{-m}^{-1})-D_{\vec{p},\eps}^{\ell} p(z_0|z_{-\hat{m}}^{-1})| = \hat{O}(\eps^n) \mbox{ on } \mathcal{M}_{\delta_0} \times T_{n, m, \hat{m}}^{\alpha}.
$$
\end{pr}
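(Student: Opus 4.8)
The plan is to reduce the statement to a uniform exponential forgetting property of the operators $\Omega_z$ and then transfer it to the derivatives by Cauchy's estimate on a complexified parameter domain. First, using~(\ref{p-z}) I would factor the common $\alpha$-typical block out of both conditional probabilities: writing $B=\Omega_{z_{-n}}\cdots\Omega_{z_{-1}}$, $A=\Omega_{z_{-m}}\cdots\Omega_{z_{-n-1}}$, $\hat A=\Omega_{z_{-\hat m}}\cdots\Omega_{z_{-n-1}}$ (with $A=I$ when $m=n$, and likewise for $\hat A$), one has
$$
p(z_0|z_{-m}^{-1})=\frac{vB\Omega_{z_0}\one}{vB\one},\qquad p(z_0|z_{-\hat m}^{-1})=\frac{wB\Omega_{z_0}\one}{wB\one},
$$
where $v=\pi A/(\pi A\one)$ and $w=\pi\hat A/(\pi\hat A\one)$ are probability vectors on $\mathcal X\times\mathcal C$ (for $\eps>0$ these denominators do not vanish, since each $\Omega_z(\eps)$ has no zero row). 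Since every entry of $\Omega_{z_0}\one$ lies in $[0,1]$, writing $\overline u:=u/(u\one)$ gives
$$
\Big|\frac{vB\Omega_{z_0}\one}{vB\one}-\frac{wB\Omega_{z_0}\one}{wB\one}\Big|=\big|(\overline{vB}-\overline{wB})\,\Omega_{z_0}\one\big|\le\|\overline{vB}-\overline{wB}\|_1 ,
$$
so the case $\ell=0$ reduces to the uniform forgetting bound $\|\overline{vB}-\overline{wB}\|_1=\hat O(\eps^n)$, holding over $\M_{\delta_0}$ uniformly in the (arbitrary) far pasts; the general $\ell$ will follow by complexification and Cauchy estimates.

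Second, I would prove the forgetting bound using the $\eps$-order bookkeeping behind the weak Black Hole analysis of~\cite{hm08}. Since $\ord$ is a valuation and $p(z_{-n}^{-1})=\prod_{j=-n}^{-1}p(z_j|z_{-n}^{j-1})$, we have $\ord(p(z_{-n}^{-1}))=\sum_{j=-n}^{-1}\ord(p(z_j|z_{-n}^{j-1}))$, so $\alpha$-typicality of $z_{-n}^{-1}$ forces $\ord(p(z_j|z_{-n}^{j-1}))=0$ at all but at most $\alpha n$ ``noisy'' positions $j$; at such a position $z_j$ lies in the image of $z(\cdot)$ (which is one-to-one), so $\Omega_{z_j}(0)=\mathbf u_{z_j}\mathbf w_{z_j}$ is a rank-one collapse onto a direction determined by the forced input $z^{-1}(z_j)$. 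Grouping the noiseless positions into $\Theta(n)$ disjoint ``clean'' sub-blocks of bounded length (possible once $\alpha$ is small, since the noiseless positions have density $\ge 1-\alpha$), each clean sub-block is, to leading order in $\eps$, a rank-one transfer operator, so passing the normalized conditional vector through one such sub-block pulls it to within $O(\eps)$ of a direction depending only on the sub-block, irrespective of the incoming vector, while each of the $\le\alpha n$ noisy steps expands the $\ell^1$-distance between two normalized vectors by at most a fixed power of $1/\eps$. Composing these effects over all of $B$ in the $\ell^1$ metric, and absorbing the resulting polynomial-in-$1/\eps$ and (at worst) exponential-in-$n$ factors into the $\hat O$-notation, yields $\|\overline{vB}-\overline{wB}\|_1=\hat O(\eps^n)$, with $\beta_2$ a fixed positive fraction once $\alpha$ has been chosen small enough, depending only on the channel and the constraint.

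Third, for the derivative bound I would complexify $(\vec p,\eps)$ as in the paragraph preceding Lemma~\ref{complex}; then $p^{\mathbb C}(z_0|z_{-m}^{-1})$ and $p^{\mathbb C}(z_0|z_{-\hat m}^{-1})$ are ratios of functions analytic in $(\vec p^{\mathbb C},\eps^{\mathbb C})$, and the argument above — run with Lemma~\ref{complex} in place of Lemma~\ref{MATH2603} and with triangle-inequality estimates in place of the $\ell^1$ identities — yields $|p^{\mathbb C}(z_0|z_{-m}^{-1})-p^{\mathbb C}(z_0|z_{-\hat m}^{-1})|=\hat O(\eps^n)$ on $\M_{\delta_0}^{\mathbb C}(\eta)$ times a complex $\eps$-disc, for $\eta$ small. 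Around each base point $(\vec p,\eps)$ with small real part $\eps$ one takes a polydisc of radius $\eta$ in the $\vec p^{\mathbb C}$-directions and radius $r$ comparable to $|\eps|$ in the $\eps^{\mathbb C}$-direction — shrunk by a further factor that is polynomial in $n$ and at worst exponential in $n$, chosen so that the polydisc avoids the zeros of $p^{\mathbb C}(z_{-m}^{-1})$ and $p^{\mathbb C}(z_{-\hat m}^{-1})$ that accumulate at $\eps=0$ — and applies the Cauchy estimate $|D^{\ell}_{\vec p,\eps}(\cdot)|\le\ell!\,r^{-\ell}\sup_{\mathrm{polydisc}}|\cdot|$. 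For fixed $\ell$ the prefactor $\ell!\,r^{-\ell}$ is a constant times a polynomial in $n$, times a polynomial in $1/\eps$, times an exponential in $n$, and each of these operations keeps a $\hat O(\eps^n)$-term in the class $\hat O(\eps^n)$ (the exponential-in-$n$ factor merely enlarges the constant $C$); this gives the stated bound, with constants depending on $\ell$.

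The main obstacle is the forgetting estimate of the second step, and the delicate point is its uniformity. Because the far-past blocks $A,\hat A$ are arbitrary, the vectors $v,w$ — hence the conditional vector entering the first clean sub-block — may lie extremely close to the boundary of the simplex, and one must verify that the rank-one collapse across each clean sub-block still operates at rate $O(\eps)$ in that situation, i.e.\ that any near-degeneracy introduced by the far past enters only through an overall factor growing at most exponentially in $n$ (harmless for $\hat O(\eps^n)$) and does not genuinely slow the contraction. Carrying this out for general clean sub-blocks — not just those corresponding to a constant noiseless output, where the rank-one part is a fixed eigen-direction — is where the real work lies; it rests on the explicit structure of $\Omega_z$ (zero-pattern equal to that of the primitive graph of $\S$, leading term at a noiseless step equal to $\Pi_{xy}q_d$ times the indicator of a single destination symbol) together with the fact that after a bounded number of steps the relevant vectors are strictly positive, with entries controlled from below by powers of $\eps$ whose exponents grow only linearly in $n$.
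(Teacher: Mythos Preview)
For $\ell=0$ your plan is essentially the paper's: both reduce to an exponential forgetting bound for $B=\Omega_{z_{-n}}\cdots\Omega_{z_{-1}}$ acting projectively on the simplex $\mathcal W$, decompose $B$ into ``good'' sub-blocks (your clean sub-blocks; the paper's $Z$-allowed blocks $[z]_i$ of a fixed length $N\ge 2eO_M$) contracting by $O(\eps)$ and ``bad'' sub-blocks expanding by at most a power of $1/\eps$, and multiply. You correctly flag uniformity near $\partial\mathcal W$ as the crux, and it is: a single noiseless step $\Omega_{z_j}=u_jv_j^T+O(\eps)$ pulls $w$ to within $O(\eps/(w\cdot u_j))$ of the target direction, and $w\cdot u_j$ can vanish on $\partial\mathcal W$, so one step does \emph{not} contract globally at rate $O(\eps)$. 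The paper's resolution (Lemmas~\ref{HAY} and~\ref{Lipschitz2}) is to take $N$ large enough that, whenever $[z]_{i-1}^i$ is $Z$-allowed, the image $f_{[z]_{i-1}}(\mathcal W)$ already lies in a fixed neighbourhood on which $|D_wf_{[z]_i}|\le C_c\eps$ uniformly; breaks and the initial block use only the crude bound of Lemma~\ref{Lipschitz}. Your sketch does not supply this step.

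For $\ell\ge 1$ your route diverges from the paper's, and here there is a genuine gap. The paper stays in real variables: it proves $|w_{i,-m}^{(\vec k)}|\le n^{|\vec k|}C_{|\vec k|}/\eps^{|\vec k|}$ (equation~(\ref{prelim})), then differentiates the recursion $w_{(i+1)N-1}=f_{[z]_i}(w_{iN-1})$ and iterates an inequality for $|w^{(\vec k)}_{(i+1)N-1,-m}-\hat w^{(\vec k)}_{(i+1)N-1,-\hat m}|$ in which the same product $\prod\|D_wf_{[z]_i}\|$ drives the decay and lower-order differences enter as already-controlled inhomogeneities. Your Cauchy-estimate approach requires the $\ell=0$ forgetting bound on a \emph{complex} neighbourhood of $(\vec p,\eps)$, but the projective normalization $w\mapsto wM/(wM\one)$ has no positivity protection once $M$ is complex, so $wM\one$ can vanish and the contraction argument breaks; and your fallback of shrinking the $\eps$-polydisc to dodge the zeros of $p^{\mathbb C}(z_{-m}^{-1})$ is unjustified --- this is a polynomial of degree $\sim mO_M$ in $\eps$ whose complex zeros you have not located, and you would need them to stay at distance $\ge e^{-Cn}$ from the real segment uniformly over the \emph{arbitrary} far past $z_{-m}^{-n-1}$. (The paper does complexify, but only for the polynomial quantities $F_n^\alpha$, never for the ratios $p(z_0|z_{-m}^{-1})$; see~(\ref{FC}).)
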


The proof of Proposition~\ref{overwhelm} depends on estimates of derivatives of certain induced maps
on a simplex, which we now describe. Let $\mathcal{W}$ denote the unit simplex in $\mathbb{R}^{|\mathcal{X}| \cdot |\mathcal{C}|}$, i.e., the set of nonnegative vectors, which sum to $1$, indexed by the joint input-state space $\mathcal{X} \times \mathcal{C}$. For any $z \in \mathcal{Z}$, $\Omega_z$ induces a mapping $f_z$ defined on $\mathcal{W}$ by
\begin{equation} \label{induce}
f_z(w)=\frac{w \Omega_z}{w \Omega_z \mathbf{1}}.
\end{equation}
Note that $\Omega_z$ implicitly depends on the input Markov chain $\vec{p} \in \mathcal{M}_0$ and $\eps$, and thus so does $f_z$. While $w \Omega_z \mathbf{1}$ can vanish at $\eps =0$, it is easy to check that
for all $w \in \mathcal{W}$, $\lim_{\eps \rightarrow 0} f_z(w)$ exists, and so $f_z$ can be defined at $\eps=0$.
Let $O_M$ denote the largest order of all entries of $\Omega_z$ (with respect to $\eps$) for all $z \in \mathcal{Z}$, or equivalently, the largest order of $p(z|x, c)(\eps)$ over all possible $x, c, z$.

For $\eps_0, \delta_0 > 0$, let
$$
U_{\delta_0, \eps_0} = \{\vec{p} \in
\mathcal{M}_{\delta_0}, \eps \in [0, \eps_0] \}.
$$

\begin{lem}  \label{Lipschitz}
Given $\delta_0 >0$, there exists $\eps_0 >0$ and $C_e>0$ such that
on $U_{\delta_0, \eps_0}$ for all $z \in \mathcal{Z}$, $|D_wf_z| \le C_e/\eps^{2O_M}$ on the entire simplex $\mathcal{W}$.
\end{lem}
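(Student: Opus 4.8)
The plan is to estimate $D_w f_z$ directly from its quotient form $f_z(w) = (w\Omega_z)/(w\Omega_z\mathbf 1)$: bound the numerator factors by absolute constants, and bound the scalar denominator $\phi(w) := w\Omega_z\mathbf 1$ below by a constant times $\eps^{O_M}$. This second estimate is the only place where the hypotheses ($X\in\mathcal M_{\delta_0}$, the definition of $O_M$, mixing of $\mathcal S$) enter; everything else is elementary.

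First I would apply the quotient rule to $f_z = (w\Omega_z)/\phi$. Since both $w \mapsto w\Omega_z$ and $w \mapsto \phi(w)$ are linear, with constant derivatives $\Omega_z$ and $\Omega_z\mathbf 1$ respectively, this yields
$$
|D_w f_z| \;\le\; \frac{|\Omega_z|}{\phi} \;+\; \frac{|w\Omega_z|\,|\Omega_z\mathbf 1|}{\phi^{2}}.
$$
Because each $p(z|y,d)(\eps)$ is analytic, hence continuous and bounded on $[0,\eps_0]$, and $\Omega_z((x,c),(y,d)) = \Pi_{x,y}q_d\,p(z|y,d)(\eps)$ with $\Pi_{x,y},q_d \in [0,1]$, the entries of $\Omega_z$ are bounded by a constant depending only on $\eps_0$, uniformly over $U_{\delta_0,\eps_0}$ and all $z$. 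Since $|w|\le 1$ on the unit simplex, $|\Omega_z|$, $|w\Omega_z|$, $|\Omega_z\mathbf 1|$, and $\phi = w\Omega_z\mathbf 1$ are all bounded above by one constant $C_1$. Hence $\phi^{-1}\le C_1\phi^{-2}$, and the display gives $|D_w f_z|\le C_2\,\phi^{-2}$ with $C_2 = C_2(\eps_0)$.

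It remains to prove $\phi(w) = w\Omega_z\mathbf 1 \ge c_3\,\eps^{O_M}$, uniformly over all $w\in\mathcal W$, all $z$, and all $\vec p\in\mathcal M_{\delta_0}$, $\eps\in(0,\eps_0]$. Writing $\phi(w) = \sum_{(x,c)} w_{(x,c)}\big(\sum_{(y,d)}\Pi_{x,y}q_d\,p(z|y,d)(\eps)\big)$ exhibits $\phi(w)$ as a convex combination of the inner sums, so it suffices to bound each inner sum below by $c_3\eps^{O_M}$. For each $x\in\mathcal X$ pick $y(x)$ with $(x,y(x))\in\mathcal S_2$; such $y(x)$ exists because $\mathcal S$ is a mixing order-$1$ constraint, so the graph with vertex set $\mathcal X$ and edges $\mathcal S_2$ is strongly connected and hence every vertex has an outgoing edge. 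Since $X\in\mathcal M_{\delta_0}$, the inequality $\pi_x\Pi_{x,y(x)} = P(X_{-1}^{0}=(x,y(x))) > \delta_0$ forces $\Pi_{x,y(x)} > \delta_0$. Fix any channel state $d^{*}$ with $q_{d^{*}}>0$. By the definition of $O_M$ we have $\ord(p(z|y(x),d^{*})) =: k \le O_M$, so $p(z|y(x),d^{*})(\eps) = a\eps^{k}+O(\eps^{k+1})$ with $a>0$; as there are only finitely many such functions, there is $\eps_0>0$ so that $p(z|y(x),d^{*})(\eps) \ge \tfrac{a}{2}\eps^{k} \ge \tfrac{a}{2}\eps^{O_M}$ for all $0<\eps\le\eps_0$. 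Letting $c_0$ be half the smallest of all these leading coefficients and $q_{\min}=\min_d q_d>0$, each inner sum is at least $\Pi_{x,y(x)}\,q_{d^{*}}\,p(z|y(x),d^{*})(\eps) \ge \delta_0\,q_{\min}\,c_0\,\eps^{O_M} =: c_3\eps^{O_M}$, as required.

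Combining the two estimates gives $|D_w f_z| \le C_2\phi^{-2} \le (C_2/c_3^{2})\,\eps^{-2O_M}$ for all $w\in\mathcal W$, all $z$, and all $(\vec p,\eps)\in U_{\delta_0,\eps_0}$ with $\eps>0$, which is the claim with $C_e = C_2/c_3^{2}$ (when $\eps=0$ and $O_M\ge 1$ the asserted bound is vacuous, and $f_z$ extends continuously to $\eps=0$ as noted above the lemma; when $O_M=0$ the channel is trivial and $f_z = w\Omega$). I expect the denominator bound $\phi \gtrsim \eps^{O_M}$ to be the crux: one must be careful that $\mathcal W$ is the full simplex over $\mathcal X\times\mathcal C$, so the bound must survive $w$ concentrating on an arbitrary single coordinate $(x,c)$ — which is precisely why a uniformly-lower-bounded transition probability $\Pi_{x,y(x)}$ is needed for every $x$, and why the hypothesis $X\in\mathcal M_{\delta_0}$ (rather than merely $X\in\mathcal M_0$) is essential.
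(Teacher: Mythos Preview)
Your proof is correct and follows precisely the paper's approach: bound the denominator $w\Omega_z\mathbf{1}$ below by a constant times $\eps^{O_M}$, then apply the quotient rule. The paper's own proof states only these two steps without detail, whereas you have filled them in carefully (in particular, the observation that the convex-combination structure of $\phi(w)$ reduces the lower bound to the row sums, and that $\mathcal{M}_{\delta_0}$ furnishes a uniform lower bound on some $\Pi_{x,y(x)}$ for every $x$). One cosmetic slip: you invoke $q_{\min}=\min_d q_d>0$, which the paper does not assume; simply keep the $q_{d^*}$ you already selected and the argument goes through unchanged.
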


\begin{proof}
Given $\delta_0 > 0$, there exist $\eps_0 > 0$ and $C > 0$ such that for any $z \in \mathcal{Z}$, $w \in \mathcal{W}$, we have, for all $0 \leq \eps \leq \eps_0$,
$$
|w \Omega_z \mathbf{1}| \geq C \eps^{O_M}.
$$
We then apply the quotient rule to establish the lemma.
\end{proof}

For any sequence $z_{-N}^{-1} \in \mathcal{Z}^N$, define
$$
\Omega_{z_{-N}^{-1}} \stackrel{\triangle}{=} \Omega_{z_{-N}} \Omega_{z_{-N+1}} \cdots \Omega_{z_{-1}}.
$$
Similar to (\ref{induce}), $\Omega_{z_{-N}^{-1}}$ induces a mapping $f_{z_{-N}^{-1}}$ on $\mathcal{W}$ by:
$$
f_{z_{-N}^{-1}}(w)=\frac{w \Omega_{z_{-N}^{-1}}}{w \Omega_{z_{-N}^{-1}} \mathbf{1}}.
$$
By the chain rule, Lemma~\ref{Lipschitz} gives upper bounds on derivatives of $f_{z_{-N}^{-1}}$.  However, these bounds can be improved considerably in certain cases, as we now describe.
A sequence $z_{-N}^{-1} \in \mathcal{Z}^N$ is {\em $Z$-allowed} if there exists $x_{-N}^{-1} \in \mathcal{A}(X)$ such that
$$
z_{-N}^{-1}=z(x_{-N}^{-1}) \stackrel{\triangle}{=} (z(x_{-N}), z(x_{-N+1}), \cdots, z(x_{-1})).
$$
Note that $z_{-N}^{-1}$ is $Z$-allowed iff $\ord(p(z_{-N}^{-1}))=0$.

Since $\Pi$ is a primitive matrix, there exists a positive integer $e$ such that $\Pi^e > 0$. For any $z \in \mathcal{Z}$, let $I_z$ denote the set of indices of the columns $(x, c)$ of $\Omega_z$ such that $z=z(x)$; note that $I_z$ can be empty for some $z \in \mathcal{Z}$.

\begin{lem}  \label{HAY}
Assume that $X \in \mathcal{M}_0$. For any $Z$-allowed sequence $z_{-N}^{-1}=z(x_{-N}^{-1}) \in \mathcal{Z}^N$ (here $x_{-N}^{-1} \in \mathcal{S}$), if $N \geq 2eO_M$, we have
$$
\ord((\Omega_{z_{-N}^{-1}})(s, t_1))=\ord((\Omega_{z_{-N}^{-1}})(s, t_2)),
$$
for all $s$, and any $t_1, t_2 \in I_{z_{-1}}$, and
$$
\ord((\Omega_{z_{-N}^{-1}})(s, t_1)) < \ord((\Omega_{z_{-N}^{-1}})(s, t_2)),
$$
for all $s$, and any $t_1 \in I_{z_{-1}}$, $t_2 \not \in I_{z_{-1}}$.
\end{lem}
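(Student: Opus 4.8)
The plan is to translate the claim about orders of entries of $\Omega_{z_{-N}^{-1}}=\Omega_{z_{-N}}\cdots\Omega_{z_{-1}}$ into a shortest‑path computation on the graph of $\Pi$, and the essential preliminary is to check that this translation is exact, i.e.\ that no cancellation occurs. Expanding the matrix product, $\Omega_{z_{-N}^{-1}}(s,t)$ is a sum over walks $\gamma\colon s=v_{-N-1},v_{-N},\dots,v_{-1}=t$ (with $v_i=(y_i,c_i)$ and $\Pi_{y_{i-1}y_i}>0$) of terms $(\prod_{i=-N}^{-1}\Pi_{y_{i-1}y_i}q_{c_i})\prod_{i=-N}^{-1}p(z_i\mid y_i,c_i)(\eps)$, the intermediate $c_i$ being free (and we may take every $q_c>0$). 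Each $p(z\mid x,c)(\eps)$ is a nonnegative analytic function of $\eps\ge0$, hence its leading Taylor coefficient is strictly positive; so each summand equals $a_\gamma\eps^{o(\gamma)}+O(\eps^{o(\gamma)+1})$ with $a_\gamma>0$ (a product of positive constants and positive leading coefficients) and $o(\gamma)=\sum_{i=-N}^{-1}\ord(p(z_i\mid y_i,c_i))$. Because all $a_\gamma>0$ there is no cancellation among the lowest‑order terms, so $\ord(\Omega_{z_{-N}^{-1}}(s,t))=\min_\gamma o(\gamma)$ (an empty minimum being $+\infty$). Since $z_i=z(x_i)$ and $z$ is one‑to‑one, $\ord(p(z(x_i)\mid y_i,c_i))$ is $0$ if $y_i=x_i$ and lies in $\{1,\dots,O_M\}$ otherwise; optimizing over the free $c_i$, this shows $\ord(\Omega_{z_{-N}^{-1}}(s,t))$ is the minimum, over $\Pi$‑walks from $s_X$ to $t_X$ (writing $s_X$ for the $\mathcal X$‑coordinate of a state $s$), of a mismatch cost $\sum_{i=-N}^{-1}\mathrm{cost}_i(\gamma)$ in which $\mathrm{cost}_i(\gamma)=0$ when $y_i=x_i$ and $\mathrm{cost}_i(\gamma)\in\{1,\dots,O_M\}$ otherwise; the terminal cost $\mathrm{cost}_{-1}$ is the one exception, forced by $t$ rather than optimized, and it is $0$ precisely when $t\in I_{z_{-1}}$.

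Granting this, the equality is immediate: if $t_1,t_2\in I_{z_{-1}}$ then both have $\mathcal X$‑coordinate $x_{-1}$ and $\mathrm{cost}_{-1}=0$ (as $p(z(x_{-1})\mid x_{-1},c)(0)=1$), so $\ord(\Omega_{z_{-N}^{-1}}(s,t_1))$ and $\ord(\Omega_{z_{-N}^{-1}}(s,t_2))$ both equal the minimum over $\Pi$‑walks $s_X\to\cdots\to x_{-1}$ of $\sum_{i=-N}^{-2}\mathrm{cost}_i(\gamma)$, an expression not mentioning $t$; call this common value $d_1(s)$. One such walk, available since $\Pi$ is primitive, is a length‑$e$ $\Pi$‑walk from $s_X$ to $x_{-N-1+e}$ followed by the subword $x_{-N-1+e}\cdots x_{-1}$ of the allowed word $x_{-N}^{-1}$ (this uses $N\ge e$, part of $N\ge2eO_M$); it has at most $e-1$ mismatched coordinates, so $d_1(s)\le(e-1)O_M<\infty$.

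For the strict inequality, let $t_2=(y_2,d_2)\notin I_{z_{-1}}$, so $y_2\ne x_{-1}$, and write $r=\ord(\Omega_{z_{-N}^{-1}}(s,t_2))$. If $r\ge eO_M$ then $d_1(s)\le(e-1)O_M<eO_M\le r$ and I am done; otherwise take a walk $\gamma$ attaining $r=\sum_{i=-N}^{-1}\mathrm{cost}_i(\gamma)$. Since each mismatch costs at least $1$, $\gamma$ has at most $eO_M-1$ mismatched coordinates, and since $\{-N,\dots,-2\}$ contains $N-1\ge eO_M$ coordinates there is a largest $m$ in that range with $y_m=x_m$. Let $\gamma'$ be the walk agreeing with $\gamma$ up to time $m$ and then running along the subword $x_m x_{m+1}\cdots x_{-1}$ of $x_{-N}^{-1}$; then $\gamma'$ is a $\Pi$‑walk $s_X\to\cdots\to x_{-1}$ with $\mathrm{cost}_i(\gamma')=\mathrm{cost}_i(\gamma)$ for $i\le m$ and $\mathrm{cost}_i(\gamma')=0$ for $i>m$, so
$$
d_1(s)\ \le\ \sum_{i=-N}^{-2}\mathrm{cost}_i(\gamma')\ =\ \sum_{i=-N}^{m}\mathrm{cost}_i(\gamma)\ \le\ \sum_{i=-N}^{-2}\mathrm{cost}_i(\gamma)\ =\ r-\mathrm{cost}_{-1}(\gamma).
$$
Since $y_2\ne x_{-1}$ gives $\mathrm{cost}_{-1}(\gamma)=\ord(p(z(x_{-1})\mid y_2,d_2))\ge1$, this yields $d_1(s)\le r-1<r$, which is the assertion.

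The one genuine obstacle is the cancellation issue settled in the first paragraph: the entries of the $\Omega_z$ need not have nonnegative Taylor coefficients, so a priori the minimal‑order contributions from different walks could cancel and push $\ord(\Omega_{z_{-N}^{-1}}(s,t))$ upward; the remark that each $p(z\mid x,c)(\eps)$, being a probability for $\eps\ge0$, has a strictly positive leading coefficient is exactly what forbids this. After that the argument is book‑keeping on the graph of $\Pi$, and the hypothesis $N\ge2eO_M$ enters only to guarantee that $x_{-1}$ is reachable from an arbitrary state within the allotted $N$ steps and that a near‑optimal walk ending at $t_2$ must pass through an $x$‑matched coordinate usable as a splice point. (If $O_M=0$ the channel is noiseless, every order is $0$ or $+\infty$, and both conclusions are trivial.)
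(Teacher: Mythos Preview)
Your proof is correct and follows essentially the same approach as the paper: both reduce $\ord(\Omega_{z_{-N}^{-1}}(s,t))$ to a minimum-cost walk problem on the graph of $\Pi$ via the expansion of the matrix product, then exploit primitivity ($\Pi^e>0$) together with the bound $N\ge 2eO_M$ to construct walks that merge cheaply with the allowed word $x_{-N}^{-1}$. Your execution is a bit cleaner in two places: you explicitly justify the no-cancellation step (positivity of leading coefficients of $p(z\mid x,c)$), which the paper leaves implicit, and for the strict inequality you give a direct splicing argument (re-route from a matched coordinate $m$ along $x$ to $x_{-1}$) instead of the paper's stronger structural assertion that every minimizing walk matches $x$ on a contiguous middle block $[-N-1+i_0(t),\,-1-j_0(t)]$.
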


\begin{proof}

Let $s=(\hat{x}_{-N-1}, \hat{c}_{-N-1})$, $t=(\hat{x}_{-1}, \hat{c}_{-1})$ $\in \mathcal{X} \times \mathcal{C}$. Then
$$
\Omega_{z_{-N}^{-1}}(s, t)=P((X_{-1}, C_{-1})=(\hat{x}_{-1}, \hat{c}_{-1}), Z_{-N}^{-1}=z_{-N}^{-1}|(X_{-N-1}, C_{-N-1})=(\hat{x}_{-N-1}, \hat{c}_{-N-1}))
$$
$$
=p((\hat{x}_{-1}, \hat{c}_{-1}), z_{-N}^{-1}|(\hat{x}_{-N-1}, \hat{c}_{-N-1})).
$$
It then follows that
$$
\ord(\Omega_{z_{-N}^{-1}}(s, t))=\ord(p((\hat{x}_{-1}, \hat{c}_{-1}), z_{-N}^{-1}|(\hat{x}_{-N-1}, \hat{c}_{-N-1})))=\ord(p((\hat{x}_{-N-1}, \hat{c}_{-N-1}), z_{-N}^{-1}, (\hat{x}_{-1}, \hat{c}_{-1}))).
$$
Since
$$
p((\hat{x}_{-N-1}, \hat{c}_{-N-1}), z_{-N}^{-1}, (\hat{x}_{-1}, \hat{c}_{-1}))=\sum_{\hat{x}_{-N}^{-2}, \hat{c}_{-N}^{-2}} p(\hat{x}_{-N-1}^{-1}, \hat{c}_{-N-1}^{-1}, z_{-N}^{-1}),
$$
we have
$$
\ord(\Omega_{z_{-N}^{-1}}(s, t))=\min \sum_{i=-N}^{-1} \ord(p(z_i|\hat{x}_i, \hat{c}_i)),
$$
where the minimization is over all sequences $(\hat{x}_{-N}^{-2}, \hat{c}_{-N}^{-2})$ such that $\hat{x}_{-N-1}^{-1} \in \mathcal{S}$.

Since $\Pi^e > 0$, there exists some $\hat{x}_{-N}^{-N-1+e}$ such that $\hat{x}_{-N-1+e}=x_{-N-1+e}$ and $p(\hat{x}_{-N-1}^{-N-1+e}) > 0$, and there exists some $\hat{x}_{-e}^{-2}$ such that $\hat{x}_{-e}=x_{-e}$ and $p(\hat{x}_{-e}^{-1}) > 0$. It then follows from $\ord(p(z|x, c)) \leq O_M$ that, as long as $N \geq 2eO_M$, for any fixed $t$ and any choice of order minimizing sequence $(\hat{x}_{-N}^{-2}(t), \hat{c}_{-N}^{-2}(t))$, there exist $0 \leq i_0=i_0(t), j_0=j_0(t) \leq e O_M$ such that $z(\hat{x}_i^j(t))=z_i^j$ if and only if $i \geq -N-1+i_0(t)$ and $j \leq -1-j_0(t)$. One further checks that, for any choice of order minimizing sequences corresponding to $t$, $(\hat{x}_{-N}^{-2}(t), \hat{c}_{-N}^{-2}(t))$,
$$
\sum_{i=-N}^{i_0(t)} \ord(p(z_i|\hat{x}_i(t), \hat{c}_i(t))),
$$
does not depend on $t$, whereas $j_0(t)=0$ if and only if $z(\hat{x}_{-1})=z_{-1}$. This immediately implies the lemma.

\end{proof}

\begin{exmp}(continuation of Example~\ref{rll})

Recall that
$$
\Omega_0=\left[\begin{array}{cc}
(1-p)(1-\eps)&p \eps\\
1-\eps&0\\
\end{array}\right],  \qquad \Omega_1=\left[\begin{array}{cc}
(1-p) \eps&p (1-\eps)\\
\eps&0\\
\end{array}\right].
$$
First, observe that the only $Z$-allowed sequences are $00, 01, 10$; then straightforward computations show that
\begin{eqnarray}
\notag \Omega_0 \Omega_0&=&\left[ \begin{array}{cc}
(1-p)^2(1-\eps)^2+p\eps(1-\eps)&p(1-p)\eps(1-\eps)\\
(1-p) (1-\eps)^2&p\eps(1-\eps)\\
\end{array} \right],\\
\notag \Omega_0 \Omega_1&=&\left[ \begin{array}{cc}
(1-p)^2 \eps (1-\eps)+p \eps^2&p(1-p)(1-\eps)^2\\
(1-p) \eps (1-\eps) & p (1-\eps)^2\\
\end{array} \right],\\
\notag \Omega_1 \Omega_0&=&\left[ \begin{array}{cc}
(1-p)^2 \eps (1-\eps)+p(1-\eps)^2&p(1-p)\eps^2\\
(1-p) \eps (1-\eps) & p \eps^2\\
\end{array} \right].
\end{eqnarray}
Note that in the spirit of Lemma~\ref{HAY}, for each of these three matrices, there is a unique column, each of whose entries minimizes the orders over all the entries in the same row.
\end{exmp}

Now fix $N \geq 2 e O_M$. For any $w \in \mathcal{W}$, let $v=f_{z_{-N}^{-1}}(w)$. Note that the mapping $f_{z_{-N}^{-1}}$ implicitly depends on $\eps$, so $v$ is in fact a function of $\eps$. If $z_{-N}^{-1}=z(x_{-N}^{-1}) \in \mathcal{Z}^N$ is $Z$-allowed, by Lemma~\ref{HAY}, when $\eps=0$,
\begin{itemize}
\item $v_i=0$ if and only if $i \not \in I_{z_{-1}}$,
\item for each $i=(x_{-1}, c_{-1}) \in I_{z_{-1}}$, $v_i=q_{c_{-1}}$, which does not depend on $w$.
\end{itemize}
Let $q(z) \in \mathcal{W}$ be the point defined by $q(z)_{(x,c)} = q_c$ for all $(x, c)$ with $z(x)=z$ and $0$ otherwise. If $z_{-N}^{-1}$ is $Z$-allowed, then
$$
\lim_{\eps \to 0} f_{z_{-N}^{-1}}(w)=q(z_{-1});
$$
thus, in this limiting sense, at $\eps=0$, $f_{z_{-N}^{-1}}$ maps the entire simplex $\mathcal{W}$ to a single point $q(z_{-1})$. The following lemma says that if $z_{-N-1}^{-1}$ is $Z$-allowed, then in a small neighbourhood of $q(z_{-N-1})$, the derivative of $f_{z_{-N}^{-1}}$ is much smaller than what would be given by repeated application of Lemma~\ref{Lipschitz}.

\begin{lem}  \label{Lipschitz2}
Given $\delta_0 >0$, there exists $\eps_0 >0$  and $C_c>0$ such that on $U_{\delta_0, \eps_0}$,
if $z_{-N-1}^{-1}$ is $Z$-allowed, then $|D_wf_{z_{-N}^{-1}}| \le C_c \eps$ on some neighbourhood of $q(z_{-N-1})$.
\end{lem}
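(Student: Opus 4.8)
The plan is to pin down the behaviour of $M:=\Omega_{z_{-N}^{-1}}$ to first order in $\eps$ near $\eps=0$ — which turns out to be rank one — and then simply differentiate the resulting explicit formula for $f_{z_{-N}^{-1}}$. As in the proof of Lemma~\ref{HAY}, for $s=(\hat x_{-N-1},\hat c_{-N-1})$ and $t=(\hat x_{-1},\hat c_{-1})$ one has $M(s,t)=p\bigl((\hat x_{-1},\hat c_{-1}),z_{-N}^{-1}\mid(\hat x_{-N-1},\hat c_{-N-1})\bigr)$. Since $z_{-N}^{-1}=z(x_{-N}^{-1})$ is $Z$-allowed and $z(\cdot)$ is one-to-one, at $\eps=0$ the channel output is deterministic, so $\{Z_{-N}^{-1}=z_{-N}^{-1}\}=\{X_{-N}^{-1}=x_{-N}^{-1}\}$; using that $C$ is i.i.d.\ and independent of $X$, a short computation gives, for \emph{all} $s,t$,
$$
M(s,t)\big|_{\eps=0}=q(z_{-1})_t\cdot b_s\big|_{\eps=0},\qquad b:=M\mathbf{1},\qquad b_s\big|_{\eps=0}=P\bigl(X_{-N}^{-1}=x_{-N}^{-1}\mid X_{-N-1}=\hat x_{-N-1}\bigr);
$$
that is, $M|_{\eps=0}=b|_{\eps=0}\,q(z_{-1})$ is the outer product of the column vector $b|_{\eps=0}$ with the row vector $q(z_{-1})$. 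The entries of $M$, hence of $b$, are analytic in $\eps$ near $0$ and jointly continuous in $(\eps,\vec p)$, and $q(z_{-1})$ is constant, so $E:=(M-b\,q(z_{-1}))/\eps$ extends analytically across $\eps=0$ and obeys $\|E\|\le C$, $|b|\le C$ uniformly on the compact closure of $\mathcal{M}_{\delta_0}\times[0,\eps_0]$. Since $q(z_{-1})\mathbf{1}=\sum_c q_c=1$, the identity $b=M\mathbf{1}=b+\eps E\mathbf{1}$ forces $E\mathbf{1}=0$, and so
$$
M=b\,q(z_{-1})+\eps E,\qquad E\mathbf{1}=0,\quad \|E\|\le C,\quad |b|\le C .
$$

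Next I would use the full hypothesis that $z_{-N-1}^{-1}$ (not merely $z_{-N}^{-1}$) is $Z$-allowed to bound the denominator of $f_{z_{-N}^{-1}}$ below near $q(z_{-N-1})$. Write $z_{-N-1}^{-1}=z(x_{-N-1}^{-1})$ with $x_{-N-1}^{-1}\in\mathcal{S}=\mathcal{A}(X)$ (legitimate since $X\in\mathcal{M}_{\delta_0}\subseteq\mathcal{M}_0$). By the formula above, for every $s\in I_{z_{-N-1}}$ (the $s$ whose input coordinate equals $x_{-N-1}$) one has $b_s|_{\eps=0}=\prod_{i=-N}^{-1}\Pi_{x_{i-1},x_i}>\delta_0^{\,N}>0$. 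As $q(z_{-N-1})$ is supported on $I_{z_{-N-1}}$ with total mass $1$, this gives $q(z_{-N-1})\,b|_{\eps=0}>\delta_0^{\,N}$, and since $|b|\le C$ there is $\rho>0$ depending only on $\delta_0$, $N$ and the dimension — in particular independent of the sequence $z_{-N-1}^{-1}$ — such that $w\,b|_{\eps=0}>\tfrac12\delta_0^{\,N}$ for all $w$ in the neighbourhood $\mathcal{N}:=\{w\in\mathcal{W}:|w-q(z_{-N-1})|<\rho\}$. Shrinking $\eps_0$ and using continuity, $wM\mathbf{1}=wb\ge c_0:=\tfrac14\delta_0^{\,N}>0$ throughout $\mathcal{N}\times[0,\eps_0]$.

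Finally I would just compute. On $\mathcal{N}$, using $M\mathbf{1}=b$, $E\mathbf{1}=0$ and $wb\ge c_0>0$,
$$
f_{z_{-N}^{-1}}(w)=\frac{wM}{wM\mathbf{1}}=\frac{(wb)\,q(z_{-1})+\eps\,wE}{wb}=q(z_{-1})+\frac{\eps}{wb}\,wE ,
$$
which is analytic in $w\in\mathcal{N}$ and $\eps\in[0,\eps_0]$; differentiating in $w$,
$$
D_wf_{z_{-N}^{-1}}(w)[h]=\frac{\eps}{wb}\Bigl(hE-\frac{(wE)(hb)}{wb}\Bigr) .
$$
Every factor on the right is uniformly bounded (and $|w|\le1$ on $\mathcal{W}$, $wb\ge c_0$), so $|D_wf_{z_{-N}^{-1}}|\le C_c\,\eps$ on $\mathcal{N}$ for a constant $C_c=C_c(\delta_0,N)$, as claimed.

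I expect the only genuinely delicate point to be the rank-one identity at $\eps=0$ together with the \emph{uniform} (in $\vec p\in\mathcal{M}_{\delta_0}$) bound on the remainder $E$; this is precisely where the one-to-one assumption on $z(\cdot)$ and the $Z$-allowedness of $z_{-N}^{-1}$ are essential, since they force $\lim_{\eps\to0}f_{z_{-N}^{-1}}$ to collapse to the single, $w$-independent point $q(z_{-1})$, which is ultimately what makes the derivative $O(\eps)$ rather than blow up as the naive iteration of Lemma~\ref{Lipschitz} would suggest. Steps two and three are then routine quotient-rule estimates.
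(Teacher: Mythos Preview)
Your proof is correct and arrives at the same decomposition the paper does, namely $f_{z_{-N}^{-1}}(w)=q(z_{-1})+\eps\,r(w)$ with $r$ smooth and uniformly bounded near $q(z_{-N-1})$, after which the quotient-rule differentiation is routine. The difference lies in how that decomposition is obtained. The paper leans on the order analysis of Lemma~\ref{HAY} (and the paragraph just before the statement) to argue that the limit $\lim_{\eps\to0}f_{z_{-N}^{-1}}(w)$ collapses to the single point $q(z_{-1})$ and that the denominator has order~$0$ with leading coefficient bounded away from~$0$ near $q(z_{-N-1})$. You instead compute $M|_{\eps=0}$ directly from the noiseless channel and the injectivity of $x\mapsto z(x)$, exhibit the explicit rank-one factorisation $M|_{\eps=0}=b|_{\eps=0}\,q(z_{-1})$, and read off $E\mathbf{1}=0$ from $b=M\mathbf{1}$. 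This is more elementary and self-contained---it does not actually need the hypothesis $N\ge 2eO_M$ that Lemma~\ref{HAY} requires---and it makes the role of the extra allowed symbol $z_{-N-1}$ completely transparent: it is precisely what forces $q(z_{-N-1})\,b|_{\eps=0}\ge\delta_0^{\,N}>0$ and hence keeps the denominator $wb$ bounded below on the neighbourhood~$\mathcal{N}$. The paper's route has the advantage that Lemma~\ref{HAY} is needed elsewhere anyway, so there is no extra cost in invoking it here; your route has the advantage of giving an explicit formula for $D_wf_{z_{-N}^{-1}}$ with constants one can track.
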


\begin{proof}

By the observations above, for all $w \in \mathcal{W}$, we have
$$
f_{z_{-N}^{-1}}(w)=q(z_{-1})+\eps r(w),
$$
where $r(w)$ is a rational vector-valued function with common denominator of order $0$ (in $\eps$) and leading coefficient uniformly bounded away from $0$ near $w=q(z_{-N-1})$ over all $\vec{p} \in \mathcal{M}_{\delta_0}$.
The lemma then immediately follows.
\end{proof}

\subsection{Proof of Proposition~\ref{overwhelm}}

We now explain the rough idea of the proof of Proposition~\ref{overwhelm}, for only the special case $\ell
=0$, i.e., exponential convergence of the difference between $p(z_0|z_{-n}^{-1})$ and $p(z_0|z_{-n-1}^{-1})$.
Let $N$ be as above and for simplicity consider only output sequences of length a multiple $N$: $n = n_0 N$. We can compute an estimate of $D_wf_{z_{-n}^0}$ by using the chain rule (with appropriate
care at $\eps =0$) and multiplying the estimates on $|D_wf_{z_{-iN}^{(-i+1)N}}|$ given by Lemmas~\ref{Lipschitz}
and ~\ref{Lipschitz2}. This yields an estimate of the form, $|D_wf_{z_{-n}^0}| \le (A\eps^{1-B\alpha})^n$ for some
constants $A$ and $B$, on the entire simplex $\mathcal{W}$. If $\alpha$ is sufficiently small and $z_{-n}^{-1}$ is
$\alpha$-typical, then the estimate from Lemma~\ref{Lipschitz2} applies enough of the time that $f_{z_{-n}^0}$ exponentially contracts the simplex. Then, interpreting elements of the simplex as conditional probabilities $p((x_i,c_i) = \cdot | z_{-m}^i)$, we obtain exponential convergence of the difference $|p(z_0|z_{-n}^{-1})-p(z_0|z_{-n-1}^{-1})|$, as desired.

\begin{proof}[Proof of Proposition~\ref{overwhelm}]

For simplicity, we only consider the special case that $n=n_0
N, m=m_0 N, \hat{m}=\hat{m}_0 N$ for a fixed $N \geq 2 e O_M$;
the general case can be easily reduced to this special case.
For the sequences $z_{-m}^{-1}, \hat{z}_{-\hat{m}}^{-1}$,
define their ``blocked'' version $[z]_{-m_0}^{-1},
[\hat{z}]_{-\hat{m}_0}^{-1}$ by setting
$$
[z]_i=z_{i N}^{(i+1)N-1}, i=-m_0, -m_0+1, \cdots, -1,
\qquad [\hat{z}]_j=\hat{z}_{j N}^{(j+1)N-1}, j=-\hat{m}_0, -\hat{m}_0+1, \cdots, -1.
$$

Let
$$
w_{i, -m}=w_{i, -m}(z^i_{-m})=p((x_i, c_i)=\cdot \;|z^i_{-m}),
$$
where $\cdot$ denotes the possible states of Markov chain $(X, C)$. Then one
checks that
\begin{equation}   \label{negative1}
p(z_0|z_{-m}^{-1})=w_{-1, -m} \Omega_{z_0} \mathbf{1}
\end{equation}
and $w_{i, -m}$ satisfies the following iteration
$$
w_{(i+1), -m}=f_{z_{i+1}}(w_{i, -m}) \qquad -n \leq i \leq -1,
$$
and the following iteration (corresponding to the blocked chain $[z]_{-m_0}^{-1}$)
\begin{equation}  \label{iteration}
w_{(i+1)N-1, -m}=f_{[z]_i}(w_{i N-1, -m}) \qquad -n_0 \leq i \leq -1,
\end{equation}
starting with
$$
w_{-n-1, -m} = p((x_{-n-1}, c_{-n-1})=\cdot|z_{-m}^{-n-1}).
$$
Similarly let
$$
\hat{w}_{i, -\hat{m}}=\hat{w}_{i, -\hat{m}}(\hat{z}^i_{-\hat{m}})=p((x_i, c_i)=\cdot \;|\hat{z}^i_{-\hat{m}}),
$$
which also satisfies the same iterations as above, however starting with
$$
\hat{w}_{-n-1, -\hat{m}} = p((x_{-n-1}, c_{-n-1})=\cdot|\hat{z}_{-\hat{m}}^{-n-1}).
$$

We say $[z]_{-n_0}^{-1}$  ``continues'' between $[z]_{i-1}$ and $[z]_i$ if $[z]_{i-1}^i$ is $Z$-allowed; on the other hand, we say $[z]_{-n_0}^{-1}$ ``breaks'' between $[z]_{i-1}$ and $[z]_i$ if it does not continue between
$[z]_{i-1}$ and $[z]_i$, namely, if one of the following occurs
\begin{enumerate}
\item $[z]_{i-1}$ is not $Z$-allowed;
\item $[z]_i$ is not $Z$-allowed;
\item both $[z]_{i-1}$ and $[z]_i$ are $Z$-allowed, however $[z]_{i-1}^i$ is not $Z$-allowed.
\end{enumerate}

Iteratively applying Lemma~\ref{Lipschitz}, there is a positive constant $C_e$ such that
\begin{equation}  \label{break}
|D_w f_{[z]_i}| \leq C_e^N/\eps^{2NO_M},
\end{equation}
on the entire simplex $\mathcal{W}$. In particular, this holds when $[z]_{-n_0}^{-1}$ ``breaks''
between $[z]_{i-1}$ and $[z]_i$. When $[z]_{-n_0}^{-1}$ ``continues'' between $[z]_{i-1}$ and
$[z]_i$, by Lemma~\ref{Lipschitz2}, we have that if
$\eps$ is small enough, there is a constant $C_c > 0$ such that
\begin{equation} \label{continue}
|D_w f_{[z]_i}| \leq C_c \eps
\end{equation}
on $f_{[z]_{i-1}}(\mathcal{W})$.

Now, apply the mean value theorem, we deduce that there exist $\xi_i$, $-n_0 \leq i \leq -1$ (here $\xi_i$ is a convex combination of $w_{-i N-1, -m}$ and $\hat{w}_{-i N -1, -\hat{m}}$) such that
$$
|w_{-1, -m}-\hat{w}_{-1, -\hat{m}}|=|f_{[z]_{-n_0}^{-1}}(w_{-n_0 N-1, -m})-f_{[z]_{-n_0}^{-1}}(\hat{w}_{-n_0 N-1, -\hat{m}})|
$$
$$
\leq \prod_{i=-n_0}^{-1} \|D_w f_{[z]_i}(\xi_i)\| |w_{-n_0 N-1, -m}-\hat{w}_{-n_0 N-1, -\hat{m}}|.
$$
Since $z_{-n}^{-1}$ is $\alpha$-typical, $[z]_{-n_0}^{-1}$ breaks at most $3\alpha n$ times; in other words, there are at least $(1/N-3\alpha)n$ $i$'s corresponding to (\ref{continue}) and at most $3\alpha n$ $i$'s corresponding to (\ref{break}). We then have
\begin{equation} \label{CcCe}
\prod_{i=-n_0}^{-1} \|D_w f_{[z]_i}(\xi_i)\| \leq C_c^{(1/N-3\alpha)n} C_e^{3\alpha N n} \eps^{(1/N-3\alpha-6 N O_M \alpha)n}.
\end{equation}
Let $\alpha_0=1/(N(3+6NO_M))$. Evidently, when $\alpha < \alpha_0$, $1/N-3\alpha-6 N O_M \alpha$ is strictly positive, we then have
\begin{equation}  \label{ww}
|w_{-1, -m}-\hat{w}_{-1, -\hat{m}}| = \hat{O}(\eps^n) \mbox{ on } \mathcal{M}_{\delta_0} \times T_{n, m, \hat{m}}^{\alpha}.
\end{equation}
It then follows from (\ref{negative1}) that
$$
|p(z_0|z_{-m}^{-1})-p(\hat{z}_0|\hat{z}_{-\hat{m}}^{-1})| = \hat{O}(\eps^n) \mbox{ on } \mathcal{M}_{\delta_0} \times T_{n, m, \hat{m}}^{\alpha}.
$$

We next show that for each $\vec{k}$, there is a positive constant $C_{|\vec{k}|}$ such that
\begin{equation} \label{prelim}
|w_{i, -m}^{(\vec{k})}|, |\hat{w}_{i, -\hat{m}}^{(\vec{k})}| \leq n^{|\vec{k}|}C_{|\vec{k}|}/\eps^{|\vec{k}|};
\end{equation}
here, the superscript ${}^{(\vec{k})}$ denotes the $\vec{k}$-th order derivative with respect to $\vec{q}=(\vec{p}, \eps)$.
In fact, the partial derivatives with respect to $\vec{p}$ are upper bounded in norm by
$n^{|\vec{k}|}C_{|\vec{k}|}$.

To illustrate the idea, we first prove (\ref{prelim}) for $|\vec{k}|=1$. Recall that
$$
w_{i, -m}=p((x_i, c_i)=\cdot|z_{-m}^i)=\frac{p((x_i, c_i)=\cdot, z_{-m}^i)}{p(z_{-m}^i)}.
$$
Let $q$ be a component of $\vec{q}=(\vec{p}, \eps)$. Then,
$$
\left|\frac{\partial}{\partial q} \left(\frac{p((x_i, c_i), z_{-m}^i)}{p(z_{-m}^i)} \right) \right| =
\left| \frac{p((x_i, c_i), z_{-m}^i)}{p(z_{-m}^i)}
\left(\frac{\frac{\partial}{\partial q} p((x_i, c_i), z_{-m}^i)}{p((x_i, c_i), z_{-m}^i)}   -
\frac{\frac{\partial}{\partial q}  p(z_{-m}^i)}   {p(z_{-m}^i)}
\right)
\right|
$$
$$
\le
\left| \frac{p((x_i, c_i)=\cdot, z_{-m}^i)}{p(z_{-m}^i)} \right|
\left(\left| \frac{\frac{\partial}{\partial q} p((x_i, c_i)=\cdot, z_{-m}^i)}   {p((x_i, c_i)=\cdot, z_{-m}^i)} \right|   +
\left|\frac{\frac{\partial}{\partial q} p(z_{-m}^i)}   {p(z_{-m}^i)} \right|
\right).
$$
We first consider the partial derivative with respect to $\eps$, i.e., $q=\eps$.
Since the first factor is bounded above by $1$, it suffices to show that both terms of the second factor are  $mO(1/\eps)$ (applying the argument to both $z_{-m}^i$ and  $\hat{z}_{-\hat{m}}^i$ and recalling that $n \le m, \hat{m} \le 2n$). We will prove this only for $\left|\frac{\partial}{\partial \eps} p(z_{-m}^i)/p(z_{-m}^i)\right|$,
with the proof for the other term being similar.
Now
\begin{equation}
\label{sum}
p(z_{-m}^i)=\sum g(x_{-m}^{-1}, c_{-m}^{-1}),
\end{equation}
where
$$
g(x_{-m}^{-1}, c_{-m}^{-1}) =
p(x_{-m}) \prod_{j=-m}^{i-1} p(x_{j+1}|x_j) \prod_{j=-m}^i p(c_j) \prod_{j=-m}^i p(z_j|x_j, c_j)
$$
and the summation is over all Markov chain sequences $x_{-m}^i$ and channel state sequences $c_{-m}^i$.
Clearly, $\frac{\partial}{\partial \eps} p(z_j|x_j, c_j)/p(z_j|x_j, c_j)$ is $O(1/\eps)$.  Thus
each $\frac{\partial}{\partial \eps}g(x_{-m}^{-1}, c_{-m}^{-1})$
is  $mO(1/\eps)$.  Each  $g(x_{-m}^{-1}, c_{-m}^{-1})$ is lower bounded by a positive constant,
uniformly over all $p \in \M_{\delta_0}$.  Thus, each $\frac{\partial}{\partial \eps}  g(x_{-m}^{-1}, c_{-m}^{-1})/ g(x_{-m}^{-1}, c_{-m}^{-1})$ is $m O(1/\eps)$. It then follows from (\ref{sum}) that $\frac{\partial}{\partial q} p(z_{-m}^i)/p(z_{-m}^i) = mO(1/\eps)$, as desired. For the partial derivatives with respect to $\vec{p}$, we observe that $\frac{\partial}{\partial q}p(x_{-m})/p(x_{-m})$ and $\frac{\partial}{\partial q} p(x_{j+1}|x_j)/p(x_{j+1}|x_j)$ (here, $q$ is a component of $\vec{p}$) are $O(1)$, with uniform constant over all $p \in \M_{\delta_0}$. We then immediately establish (\ref{prelim}) for $|\vec{k}|=1$.

We now prove (\ref{prelim}) for a generic $\vec{k}$.

Apply the multivariate Faa Di Bruno formula (for the derivatives of a composite function)~\cite{co96, le96} to the function $f(y)=1/y$ (here, $y$ is a function), we have for $\vec{l}$ with $|\vec{l}| \neq 0$,
$$
f(y)^{(\vec{l})}=\sum
D(\vec{a}_1, \vec{a}_2, \cdots, \vec{a}_t) (1/y) (y^{(\vec{a}_1)}/y) (y^{(\vec{a}_2)}/y) \cdots (y^{(\vec{a}_t)}/y),
$$
where the summation is over the set of unordered sequences of non-negative vectors $\vec{a}_1, \vec{a}_2,
\cdots, \vec{a}_t$ with $\vec{a}_1+\vec{a}_2+\cdots+\vec{a}_t=\vec{l}$ and $D(\vec{a}_1, \vec{a}_2, \cdots, \vec{a}_t)$ is the corresponding coefficient. For any $\vec{l}$, define $\vec{l}!=\prod_{i=1}^{|\mathcal{S}_2|+1} l_i!$; and for any $\vec{l} \preceq \vec{k}$ (every component of $\vec{l}$ is less or equal to the corresponding one of $\vec{k}$), define
$C_{\vec{k}}^{\vec{l}}=\vec{k}!/(\vec{l}!(\vec{k}-\vec{l})!)$. Then for any $\vec{k}$, applying the multivariate Leibnitz rule, we have
$$
\left(\frac{p((x_i, c_i), z_{-m}^i)}{p(z_{-m}^i)}\right)^{(\vec{k})}=\sum_{\vec{l} \preceq \vec{k}} C_{\vec{k}}^{\vec{l}} (p((x_i, c_i), z_{-m}^i))^{(\vec{k}-\vec{l})} (1/p(z_{-m}^i))^{(\vec{l})}
$$
$$
\hspace{-1cm}
=\sum_{\vec{l} \preceq \vec{k}} \sum_{\vec{a}_1+\vec{a}_2+\cdots+\vec{a}_t=\vec{l}} C_{\vec{k}}^{\vec{l}} D(\vec{a}_1, \cdots, \vec{a}_t) \frac{p((x_i, c_i), z_{-m}^i)}{p(z_{-m}^i)} \frac{p((x_i, c_i), z_{-m}^i)^{(\vec{k}-\vec{l})}}{p((x_i, c_i), z_{-m}^i)} \frac{p(z_{-m}^i)^{(\vec{a}_1)}}{p(z_{-m}^i)} \cdots \frac{p(z_{-m}^i)^{(\vec{a}_t)}}{p(z_{-m}^i)}.
$$
Then, similarly as above, one can show that
\begin{equation}  \label{wordsINequation}
p(z_{-m}^i)^{(\vec{a})}/p(z_{-m}^i), \qquad p((x_i, c_i), z_{-m}^i)^{(\vec{a})}/p((x_i, c_i), z_{-m}^i) = m^{|\vec{a}|} O(1/\eps^{|\vec{a}|}),
\end{equation}
which implies that there is a positive constant $C_{|\vec{k}|}$ such that
$$
|w_{i, -m}^{(\vec{k})}| \leq n^{|\vec{k}|}C_{|\vec{k}|}/\eps^{|\vec{k}|}.
$$
Obviously, the same argument can be applied to upper bound $|\hat{w}_{i, -\hat{m}}^{(\vec{k})}|$.

We next prove that, for each $\vec{k}$,
\begin{equation} \label{xiang-jian}
|w^{(\vec{k})}_{-1, -m}-\hat{w}^{(\vec{k})}_{-1, -\hat{m}}| = \hat{O}(\eps^n) \mbox{ on } \mathcal{M}_{\delta_0} \times T_{n, m, \hat{m}}^{\alpha}.
\end{equation}
Proposition~\ref{overwhelm} will then follow from (\ref{negative1}).

We first prove this for $|\vec{k}| =1$. Again, let $q$ be a component of $\vec{q}=(\vec{p}, \eps)$. Then, for $i=-1, -2, \cdots, -n_0$, we have
\begin{equation}  \label{k1-1}
\frac{\partial}{\partial q} w_{(i+1) N-1, -m} = \frac{\partial f_{[z]_i}}{\partial
w}(\vec{q}, w_{i N-1, -m}) \frac{\partial}{\partial q}w_{i N-1, -m}+\frac{\partial f_{[z]_i}}{\partial
q}(\vec{q}, w_{i N-1, -m}),
\end{equation}
and
\begin{equation}  \label{k1-2}
\frac{\partial}{\partial q}\hat{w}_{(i+1)N-1, -\hat{m}} = \frac{\partial f_{[z]_i}}{\partial
w}(\vec{q}, \hat{w}_{i N-1, -\hat{m}}) \frac{\partial}{\partial q}\hat{w}_{i N-1, -\hat{m}}+\frac{\partial f_{[z]_i}}{\partial q}(\vec{q}, \hat{w}_{i N-1, -\hat{m}}).
\end{equation}
Taking the difference, we then have
$$
\hspace{-0.5cm} \frac{\partial}{\partial q}w_{(i+1)N-1, -m} - \frac{\partial}{\partial q}\hat{w}_{(i+1)N-1,
-\hat{m}}=\frac{\partial f_{[z]_i}}{\partial q}(\vec{q},
w_{i N-1, -m})-\frac{\partial f_{[z]_i}}{\partial q}(\vec{q}, \hat{w}_{i N-1, -\hat{m}})
$$
$$
+\frac{\partial f_{[z]_i}}{\partial w}(\vec{q}, w_{i N-1, -m})
\frac{\partial}{\partial q}w_{i N-1, -m} - \frac{\partial f_{[z]_i}}{\partial w}(\vec{q}, \hat{w}_{i N-1, -\hat{m}}) \frac{\partial}{\partial q}\hat{w}_{i N-1, -\hat{m}}
$$
$$
= \left(\frac{\partial f_{[z]_i}}{\partial
q}(\vec{q}, w_{i N-1, -m})-\frac{\partial f_{[z]_i}}{\partial
q}(\vec{q}, \hat{w}_{i N-1, -\hat{m}}) \right)
$$
$$
+ \left(\frac{\partial f_{[z]_i}}{\partial
w}(\vec{q}, w_{i N-1, -m}) \frac{\partial}{\partial q}w_{i N-1, -m}
-\frac{\partial f_{[z]_i}}{\partial
w}(q, \hat{w}_{i N-1, -\hat{m}}) \frac{\partial}{\partial q}w_{i N-1, -m}  \right)
$$
$$
+ \left(\frac{\partial f_{[z]_i}}{\partial
w}(\vec{q}, \hat{w}_{i N-1, -\hat{m}}) \frac{\partial}{\partial q}w_{i N-1, -m} - \frac{\partial f_{[z]_i}}{\partial
w}(\vec{q}, \hat{w}_{i N-1, -\hat{m}}) \frac{\partial}{\partial q}\hat{w}_{i N-1, -\hat{m}} \right).
$$
This last expression is the sum of three terms, which we will refer to as $T_1$, $T_2$ and $T_3$.

From Lemma~\ref{Lipschitz}, one
checks that for all $[z]_i \in \mathcal{Z}^N$, $w \in \mathcal{W}$ and $\vec{q} \in U_{\delta_0, \eps_0}$,
$$
\left| \frac{\partial^2 f_{[z]_i}}{\partial \vec{q} \partial w}(\vec{q}, w) \right|, \left|
\frac{\partial^2 f_{[z]_i}}{\partial w \partial w}(\vec{q}, w) \right| \leq C/\eps^{4 N O_M}.
$$
(Here, we remark that there are many different constants in this proof, which we
will often refer to using the same notation $C$, making sure that the dependence of these constants on various parameters
is clear.) It then follows from the mean value theorem  that for each $i=-1, -2, \cdots, -n_0$
$$
T_1 \le  (C/\eps^{4 N O_M}) |w_{i N-1, -m}- \hat{w}_{i N-1, -\hat{m}}|.
$$
By the mean value theorem and (\ref{prelim}),
$$
T_2 \le (C/\eps^{4 N O_M}) (nC_1/\eps) |w_{i N-1, -m}- \hat{w}_{i N-1, -\hat{m}}|.
$$
And finally
$$
T_3 \le \left\| \frac{\partial f_{[z]_i}}{\partial
w}(\vec{q}, \hat{w}_{i N-1, -\hat{m}}) \right\| |\frac{\partial}{\partial q}w_{i N-1, -m} - \frac{\partial}{\partial q}\hat{w}_{i N-1, -\hat{m}}|.
$$
Thus,
$$
|\frac{\partial}{\partial q}w_{(i+1)N-1, -m}-\frac{\partial}{\partial q}\hat{w}_{(i+1)N-1, -\hat{m}}| \leq \left\| \frac{\partial f_{[z]_i}}{\partial
w}(\vec{q}, \hat{w}_{i N-1, -\hat{m}}) \right\| |\frac{\partial}{\partial q}w_{i N-1, -m}- \frac{\partial}{\partial q}\hat{w}_{i N-1, -\hat{m}}|
$$
$$
+(1+nC_1/\eps)C\eps^{-4 N O_M} |w_{i N-1, -m}-\hat{w}_{i N-1, -\hat{m}}|.
$$
Iteratively apply this inequality to obtain
$$
\hspace{-2cm} |\frac{\partial}{\partial q}w_{-1, -m}-\frac{\partial}{\partial q}\hat{w}_{-1, -\hat{m}}| \leq
\prod_{i=-n_0}^{-1} \left\|\frac{\partial f_{[z]_{i}}}{\partial w}(\vec{q}, \hat{w}_{i N-1, -\hat{m}}) \right\| |\frac{\partial}{\partial q}w_{-n_0 N-1, -m}-
\frac{\partial}{\partial q}\hat{w}_{-n_0 N -1, -\hat{m}}|
$$
$$
+\prod_{i=-n_0+1}^{-1}  \left\|\frac{\partial f_{[z]_i}}{\partial w}(\vec{q}, \hat{w}_{i N-1, -\hat{m}}) \right\| (1+nC_1/\eps)C\eps^{-4 N O_M} |w_{-n_0 N -1, -m}- \hat{w}_{-n_0 N -1, -\hat{m}}|
$$
$$
+\cdots+\prod_{i=-j}^{-1}  \left\|\frac{\partial f_{[z]_i}}{\partial w}(\vec{q}, \hat{w}_{i N-1, -\hat{m}}) \right\| (1+nC_1/\eps)C\eps^{-4 N O_M} |w_{(-j-1) N -1, -m}- \hat{w}_{(-j-1)N -1, -\hat{m}}|+
$$
$$
+\cdots+\left\|\frac{\partial f_{[z]_{-1}}}{\partial w}(\vec{q}, \hat{w}_{-N-1, -\hat{m}}) \right\| (1+nC_1/\eps)C\eps^{-4 N O_M} |w_{-2N-1 , -m}- \hat{w}_{-2N-1, -\hat{m}}|
$$
\begin{equation} \label{haofan}
+(1+nC_1/\eps)C\eps^{-4 N O_M} |w_{-N-1, -m}- \hat{w}_{-N-1, -\hat{m}}|.
\end{equation}

Now, apply the mean value theorem, we deduce that there exist $\xi_i$, $-n_0 \leq i \leq -j-2$ (here $\xi_i$ is a convex combination of $w_{-i N-1, -m}$ and $\hat{w}_{-i N -1, -\hat{m}}$) such that
$$
|w_{(-j-1)N-1, -m}-\hat{w}_{(-j-1)N-1, -\hat{m}}|=|f_{[z]_{-n_0}^{-j-2}}(w_{-n_0 N-1, -m})-f_{[z]_{-n_0}^{-j-2}}(\hat{w}_{-n_0 N-1, -\hat{m}})|
$$
$$
\leq \prod_{i=-n_0}^{-j-2} \|D_w f_{[z]_i}(\xi_i)\| |w_{-n_0 N-1, -m}-\hat{w}_{-n_0 N-1, -\hat{m}}|.
$$
Then, recall that an $\alpha$-typical sequence $z_{-n}^{-1}$ breaks at most $3 \alpha n$ times.
Thus there are at least $(1-3\alpha)n$ $i$'s where we can use the estimate (\ref{continue}) and
at most $3\alpha n$ $i$'s where we can only use the weaker estimates (\ref{break}). Similar to the derivation of (\ref{CcCe}), with Remark~\ref{TheSameEps}, we derive that for any $\alpha < \alpha_0$, every term in the right hand side of (\ref{haofan}) is $\hat{O}(\eps^n)$ on $\mathcal{M}_{\delta_0} \times T_{n, m, \hat{m}}^{\alpha}$ (we use (\ref{prelim}) to upper bound the first term). Again, with Remark~\ref{TheSameEps}, we conclude that
$$
\left|\frac{\partial w_{-1, -m}}{\partial \vec{q}}-\frac{\partial \hat{w}_{-1, -\hat{m}}}{\partial \vec{q}}\right|= \hat{O}(\eps^n) \mbox{ on } \mathcal{M}_{\delta_0} \times T_{n, m, \hat{m}}^{\alpha},
$$
which, by (\ref{negative1}), implies the proposition for $\ell=1$, as desired.

The proof of (\ref{xiang-jian}) for a generic $\vec{k}$ is rather similar, however very tedious. We next briefly illustrate the idea of the proof. Note that (compare with (\ref{k1-1}), (\ref{k1-2}) for $|\vec{k}|=1$)
$$
w_{(i+1) N-1, -m}^{(\vec{k})} = \frac{\partial f_{[z]_i}}{\partial
w}(\vec{q}, w_{i N-1, -m}) w_{i N-1, -m}^{(\vec{k})}+\mbox{ others}
$$
and
$$
\hat{w}_{(i+1)N-1, -\hat{m}}^{(\vec{k})} = \frac{\partial f_{[z]_i}}{\partial
w}(\vec{q}, \hat{w}_{i N-1, -\hat{m}}) \hat{w}_{i N-1, -\hat{m}}^{(\vec{k})} +\mbox{ others},
$$
where the first ``others'' is a linear combination of terms taking the following forms (below, $t$ can be $0$, which corresponds to the partial derivatives of $f$ with respect to the first argument $\vec{q}$):
$$
f_{[z]_i}^{(\vec{k}')}(\vec{q}, w_{i N-1, -m}) w_{i N-1, -m}^{(\vec{a}_1)} \cdots w_{i N-1, -m}^{(\vec{a}_t)},
$$
and the second ``others'' is a linear combination of terms taking the following forms:
$$
f_{[z]_i}^{(\vec{k}')}(\vec{q}, \hat{w}_{i N-1, -\hat{m}}) \hat{w}_{i N-1, -\hat{m}}^{(\vec{a}_1)} \cdots \hat{w}_{i N-1, -\hat{m}}^{(\vec{a}_t)},
$$
here $\vec{k}' \preceq \vec{k}$, $t \leq |\vec{k}|$ and $|\vec{a}_i| < |\vec{k}|$ for all $i$. Using (\ref{prelim}) and the fact that there exists a constant $C$ (by Lemma~\ref{Lipschitz}) such that
$$
|f_{[z]_i}^{(\vec{k}')}(\vec{q}, w_{i N-1, -m})| \leq C/\eps^{4NO_M |\vec{k}'|},
$$
we then can establish (compare with (\ref{haofan}) for $|\vec{k}|=1$)
$$
|w_{(i+1)N-1, -m}^{(k)}-\hat{w}_{(i+1)N-1, -\hat{m}}^{(k)}| \leq \left\| \frac{\partial f_{[z]_i}}{\partial
w}(\vec{q}, \hat{w}_{i N-1, -\hat{m}}) \right\| |w_{i N-1, -m}^{\vec{k}}-\hat{w}_{i N-1, -\hat{m}}^{(\vec{k})}|+ \mbox{ others},
$$
where ``others'' is the sum of finitely many terms, each of which takes the following form (see the $j$-th term of (\ref{haofan}) for $|\vec{k}|=1$)
\begin{equation} \label{rough-form}
n^{D_{\vec{k}'}} O(1/\eps^{D_{\vec{k}'}}) \prod_{i=-j}^{-1}  \left\|\frac{\partial f_{[z]_i}}{\partial w}(\vec{q}, \hat{w}_{i N-1, -\hat{m}}) \right\| |w_{(-j-1) N-1, -m}^{(\vec{a})}-\hat{w}_{(-j-1) N-1, -\hat{m}}^{(\vec{a})}|,
\end{equation}
where $|\vec{a}| < |\vec{k}|$, $D_{\vec{k}'}$ is a constant dependent on $\vec{k}'$. Then inductively, one can
use the similar approach to establish that (\ref{rough-form}) is $\hat{O}(\eps^n)$ on $\mathcal{M}_{\delta_0} \times T_{n, m, \hat{m}}^{\alpha}$, which implies (\ref{xiang-jian}) for a generic $\vec{k}$, and thus the proposition for a generic $\ell$.

\end{proof}

\subsection{Asymptotic behavior of entropy rate}

The parameterization of $Z$ as a function of $\eps$ fits in the framework of~\cite{hm08} in a more general setting. Consequently, we have the following three propositions.

\begin{pr}  \label{conditional}
Assume that $\vec{p} \in \mathcal{M}_0$. For any sequence $z_{-n}^0 \in \mathcal{Z}^{n+1}$, $p((x_{-1}, c_{-1})=\cdot|z_{-n}^{-1})$ and $p(z_0|z_{-n}^{-1})$ are analytic around $\eps=0$. Moreover, $\ord(p(z_0|z_{-n}^{-1})) \leq O_M$.
\end{pr}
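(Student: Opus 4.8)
The plan is to argue directly from the matrix-product representation (\ref{p-z}), without invoking the machinery of Section~2.1; alternatively, the statement can be read off from the general framework of~\cite{hm08}. First, since $\vec{p} \in \mathcal{M}_0$ the transition matrix $\Pi$ is primitive, and $\Omega = \sum_z \Omega_z$ does not involve $\eps$ (because $\sum_z p(z|y,d)(\eps) \equiv 1$), so $\Omega$ has a fixed stationary vector $\pi$. Each $\Omega_z$ has entries $\Pi_{x,y}\, q_d\, p(z|y,d)(\eps)$, which are analytic in $\eps$ near $0$. Hence $p(z_{-n}^{-1}) = \pi\Omega_{z_{-n}}\cdots\Omega_{z_{-1}}\mathbf{1}$, $p(z_{-n}^0) = \pi\Omega_{z_{-n}}\cdots\Omega_{z_0}\mathbf{1}$, and the row vector $p((x_{-1},c_{-1})=\cdot,\ z_{-n}^{-1}) = \pi\Omega_{z_{-n}}\cdots\Omega_{z_{-1}}$ are all analytic in $\eps$ near $0$, being finite sums of products of analytic functions. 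The two conditional quantities in the statement are ratios of these analytic functions with common denominator $p(z_{-n}^{-1})$, so everything reduces to controlling $\ord(p(z_{-n}^{-1}))$ and checking that the corresponding numerators have $\eps$-order at least as large.

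Next I would compute $\ord(p(z_{-n}^{-1}))$ combinatorially. Since $\vec{p}\in\mathcal{M}_0$ we have $\mathcal{A}(X)=\mathcal{S}$ and $\mathcal{S}_n\neq\emptyset$, and expanding over input/channel-state paths, $p(z_{-n}^{-1}) = \sum_{x_{-n}^{-1}\in\mathcal{S}_n,\ c_{-n}^{-1}} p(x_{-n}^{-1},c_{-n}^{-1})\prod_{i=-n}^{-1}p(z_i|x_i,c_i)$. Each surviving summand is nonnegative for real $\eps\geq 0$, hence has a strictly positive leading Taylor coefficient, so no cancellation occurs among the lowest-order terms and $\ord(p(z_{-n}^{-1})) = \min\{\sum_{i=-n}^{-1}\ord(p(z_i|x_i,c_i)) : x_{-n}^{-1}\in\mathcal{S}_n,\ c_{-n}^{-1}\}$, a finite number (at most $nO_M$, since each $p(z|x,c)$ is not identically $0$). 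In particular $p(z_{-n}^{-1})\not\equiv 0$, so division by it is meaningful near $\eps=0$. The same reasoning handles $p((x_{-1},c_{-1})=s,\ z_{-n}^{-1})$ whenever it is not identically zero (when it is, the corresponding conditional probability is $0$, trivially analytic).

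For the analyticity claim I would then use elementary positivity: for small $\eps\geq 0$ one has $p((x_{-1},c_{-1})=s,\ z_{-n}^{-1}) \leq p(z_{-n}^{-1})$ and $p(z_{-n}^0)\leq p(z_{-n}^{-1})$, so each numerator has $\eps$-order at least $b:=\ord(p(z_{-n}^{-1}))$. Writing a numerator as $\eps^{a}g(\eps)$ and the denominator as $\eps^{b}h(\eps)$ with $g(0),h(0)\neq 0$ and $a\geq b$, the quotient equals $\eps^{a-b}\, g(\eps)/h(\eps)$, which is analytic near $\eps=0$ because $h(0)\neq 0$ and $a-b\geq 0$. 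This gives analyticity of both $p((x_{-1},c_{-1})=\cdot|z_{-n}^{-1})$ and $p(z_0|z_{-n}^{-1})$.

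Finally, for $\ord(p(z_0|z_{-n}^{-1}))\leq O_M$, choose $x_{-n}^{-1}\in\mathcal{S}_n$ and $c_{-n}^{-1}$ attaining the minimum $b$ above. Since $\mathcal{S}$ is a mixing finite-type constraint of order $1$ and $\vec{p}\in\mathcal{M}_0$, there is $x_0$ with $\Pi_{x_{-1},x_0}>0$, so $x_{-n}^0\in\mathcal{S}_{n+1}$ and $p(x_{-n}^0)>0$; pick any $c_0$. Then $p(z_{-n}^0)\geq p(x_{-n}^0,c_{-n}^0)\prod_{i=-n}^{0}p(z_i|x_i,c_i)$, whose order is $b+\ord(p(z_0|x_0,c_0))\leq b+O_M$, so $\ord(p(z_{-n}^0))\leq b+O_M$ and therefore $\ord(p(z_0|z_{-n}^{-1})) = \ord(p(z_{-n}^0))-\ord(p(z_{-n}^{-1}))\leq O_M$. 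The only delicate point in all of this is the ``no cancellation'' assertion that the $\eps$-order of such a sum of probabilities equals the minimum of the orders of its summands; this is immediate from the nonnegativity (for real $\eps\geq 0$) of the summands, and hence of their leading coefficients, and everything else is routine bookkeeping with (\ref{p-z}).
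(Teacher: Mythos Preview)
Your proof is correct and self-contained, but it differs from the paper's argument in a meaningful way. The paper's proof is very short: it cites Proposition~2.4 of \cite{hm08} for the analyticity of $p((x_{-1},c_{-1})=\cdot\,|\,z_{-n}^{-1})$, and then deduces analyticity and the order bound for $p(z_0|z_{-n}^{-1})$ from the identity $p(z_0|z_{-n}^{-1}) = p((x_{-1},c_{-1})=\cdot\,|\,z_{-n}^{-1})\,\Omega_{z_0}\mathbf{1}$ together with the observation that every row sum of $\Omega_{z_0}$ is not identically zero (hence has order at most $O_M$). You instead work directly from the path expansion of $p(z_{-n}^{-1})$ and $p(z_{-n}^0)$: you compute orders combinatorially as minima over input/channel-state paths, use nonnegativity of the summands to rule out cancellation, and compare numerator and denominator orders via the elementary inequality $p(\cdot,z_{-n}^{-1})\le p(z_{-n}^{-1})$. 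For the order bound you extend a minimizing path by one step, picking up at most $O_M$ extra order. Your approach buys self-containedness (no appeal to \cite{hm08}) and makes the combinatorics of orders explicit; the paper's approach is shorter because the analyticity of the conditional state vector has already been established elsewhere, and once that is in hand the order bound follows immediately from the form $w\Omega_{z_0}\mathbf{1}$ with $w(0)$ a probability vector. One minor remark: your invocation of ``$\mathcal{S}$ mixing of order $1$'' to find $x_0$ with $\Pi_{x_{-1},x_0}>0$ is unnecessary---this follows simply because $\Pi$ is stochastic.
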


\begin{proof}
Analyticity of $p((x_{-1}, c_{-1})=\cdot|z_{-n}^{-1})$ follows from Proposition 2.4 in~\cite{hm08}. It then follows from $p(z_0|z_{-n}^{-1})=p((x_{-1}, c_{-1})=\cdot|z_{-n}^{-1}) \Omega_{z_{0}} \mathbf{1}$ and the fact that any row sum of $\Omega_{z_{0}}$ is non-zero that $p(z_0|z_{-n}^{-1})$ is analytic with $\ord(p(z_0|z_{-n}^{-1})) \leq O_M$.
\end{proof}

\begin{pr} (see Proposition 2.7 in~\cite{hm08})   \label{StabilizingLemma}
Assume that $\vec{p} \in \mathcal{M}_0$. For two fixed hidden Markov chain sequences $z_{-m}^0, \hat{z}_{-\hat{m}}^0$ such that
$$
z_{-n}^0=\hat{z}_{-n}^0, \qquad \ord(p(z_{-n}^{-1}|z_{-m}^{-n-1})), \;\; \ord(p(\hat{z}_{-n}^{-1}|\hat{z}_{-\hat{m}}^{-n-1})) \leq k
$$
for some $n \leq m, \hat{m}$ and some $k$, we have for $j$ with
$0 \leq j \leq n-4k-1$,
$$
p^{(j)}(z_0|z_{-m}^{-1})(0)=p^{(j)}(\hat{z}_0|\hat{z}_{-\hat{m}}^{-1})(0),
$$
where the derivatives are taken with respect to $\eps$.
\end{pr}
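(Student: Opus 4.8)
The plan is to reduce the statement to a ``forgetting'' property of the induced simplex maps $f_{z_{-n}^{-1}}$ of Section~\ref{asymptotics}, in the same spirit as---and, in fact, logically prior to---the argument behind Proposition~\ref{overwhelm}. The cleanest route is to observe that an i.i.d.\ channel state $C$ merely enlarges the hidden state alphabet from $\mathcal{X}$ to $\mathcal{X}\times\mathcal{C}$ and replaces $\Pi$ by $\Omega$, with the noiseless output $z=z(x)$ still depending only on the first coordinate of the hidden state; the statement is then exactly Proposition~2.7 of~\cite{hm08}, and one may invoke it. For the reader's convenience I indicate the underlying argument in the present notation.

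First I would fix $\vec{p}\in\mathcal{M}_0$ and use Proposition~\ref{conditional} to record that $w_{-1,-m}$, $\hat{w}_{-1,-\hat{m}}$, $p(z_0|z_{-m}^{-1})$ and $p(\hat{z}_0|\hat{z}_{-\hat{m}}^{-1})$ are analytic in $\eps$ near $0$. By \eqref{negative1} and $z_0=\hat{z}_0$,
$$
p(z_0|z_{-m}^{-1})-p(\hat{z}_0|\hat{z}_{-\hat{m}}^{-1})=\big(w_{-1,-m}-\hat{w}_{-1,-\hat{m}}\big)\,\Omega_{z_0}\mathbf{1},
$$
so it suffices to show that $w_{-1,-m}-\hat{w}_{-1,-\hat{m}}=O(\eps^{\,n-4k})$. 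Since $z_{-n}^{-1}=\hat{z}_{-n}^{-1}$, the iterations in the proof of Proposition~\ref{overwhelm} express both $w_{-1,-m}$ and $\hat{w}_{-1,-\hat{m}}$ as the image, under the \emph{same} map $f_{z_{-n}^{-1}}$, of the probability vectors $w_{-n-1,-m}$, resp.\ $\hat{w}_{-n-1,-\hat{m}}$, in $\mathcal{W}$; and the hypothesis $\ord(p(z_{-n}^{-1}|z_{-m}^{-n-1}))\le k$ says exactly that the analytic scalars $w_{-n-1,-m}\,\Omega_{z_{-n}^{-1}}\mathbf{1}$ and $\hat{w}_{-n-1,-\hat{m}}\,\Omega_{z_{-n}^{-1}}\mathbf{1}$ have order $\le k$. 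So the whole problem becomes: for analytic $\mathcal{W}$-valued $w',w''$ of $\eps$ with $\ord(w'\,\Omega_{z_{-n}^{-1}}\mathbf{1})\le k$ and $\ord(w''\,\Omega_{z_{-n}^{-1}}\mathbf{1})\le k$, prove $f_{z_{-n}^{-1}}(w')-f_{z_{-n}^{-1}}(w'')=O(\eps^{\,n-4k})$.

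To establish this I would decompose $z_{-n}^{-1}$ into its maximal $Z$-allowed runs, separated by ``breaks''. On a sufficiently long $Z$-allowed run, the order structure of $\Omega_{z_{-n}^{-1}}$ given by Lemma~\ref{HAY} makes the induced map collapse: at $\eps=0$ it sends all of $\mathcal{W}$ to the single point $q(z_{-1})$, and, tracking orders exactly as in the proof of Lemma~\ref{HAY}, each additional step inside the run pushes the dependence of the image on the input vector one order higher in $\eps$. A ``break'' between runs forces a noise event and hence contributes, up to a boundary correction of size $O_M$ (from the $\Pi^e>0$ burn-in in Lemma~\ref{HAY}), at least $1$ to $\ord(p(z_{-n}^{-1}|z_{-m}^{-n-1}))\le k$. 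A careful count then bounds the total number of ``lost'' steps---breaks together with the $O_M$-sized burn-in prefixes of the runs---by $4k$, leaving at least $n-4k$ ``good'' steps and hence the claimed $O(\eps^{\,n-4k})$.

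The delicate point---and the reason this does not simply fall out of the estimates behind Proposition~\ref{overwhelm}---is that here the conclusion must be \emph{slope~$1$} in $n$: the difference is $O(\eps^{\,n-4k})$, not merely $O(\eps^{\,cn})$ for some $c<1$, whereas iterating the Lipschitz bounds of Lemmas~\ref{Lipschitz} and~\ref{Lipschitz2} through size-$N$ blocks loses a factor of $N$ in the exponent. One therefore cannot work with $\|D_wf\|$ on the simplex alone; one must exploit the \emph{exact} order structure of the matrix products $\Omega_{z_{-n}}\cdots\Omega_{z_{-1}}$ and keep precise track of how the budget of at most $k$ noise events is spent across the runs. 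This bookkeeping is the heart of the proof of Proposition~2.7 in~\cite{hm08}, to which we refer for the details.
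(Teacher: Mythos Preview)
Your proposal is correct and matches the paper's treatment: the paper gives no proof of its own for this proposition, merely citing Proposition~2.7 of~\cite{hm08}, and you do the same while additionally supplying a sketch of the argument. The sketch is sound---the reduction via \eqref{negative1} to $w_{-1,-m}-\hat{w}_{-1,-\hat{m}}=O(\eps^{\,n-4k})$, the identification of the order hypothesis with $\ord(w_{-n-1,-m}\Omega_{z_{-n}^{-1}}\mathbf{1})\le k$, the decomposition into maximal $Z$-allowed runs, and especially your observation that the required ``slope~$1$'' exponent rules out the block-based Lipschitz estimates of Proposition~\ref{overwhelm} and forces a direct order count on the matrix product---are all on target and go usefully beyond what the paper itself says.
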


\begin{rem} \label{StabilizingRemark}
It follows from Proposition~\ref{StabilizingLemma} that for any $\alpha$-typical sequence $z_{-n}^{-1}$ with $\alpha$ small enough and $n$ large enough, $\ord(p(z_0|z_{-n}^{-1}))=\ord(p(z_0|z_{-n-1}^{-1}))$
\end{rem}

\begin{pr}  (see Theorem 2.8 in~\cite{hm08}) \label{main}
Assume that $\vec{p} \in \mathcal{M}_0$. For any $k \geq 0$,
\begin{equation}   \label{mainFormula}
H(Z)=H(Z)|_{\eps=0}+\sum_{j=1}^{k} g_j \eps^j + \sum_{j=1}^{k+1} f_j
\eps^j \log \eps + O(\eps^{k+1}),
\end{equation}
where $f_j$'s and $g_j$'s depend on $\Pi$ and $q_c$ (but not on $\eps$), the transition probability matrix of $X$.
\end{pr}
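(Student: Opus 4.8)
The plan is to reduce the limit $H(Z)=\lim_n H_n(Z)$ to a single $H_n(Z)$ with $n=n(k)$ large but fixed, at the cost of an $O(\eps^{k+1})$ error, and then to read off the claimed form from the finite sum $H_n(Z)$ using analyticity (this is essentially Theorem~2.8 of~\cite{hm08}, re-derived here from the preceding results). First I would fix $\vec p\in\mathcal{M}_0$ and choose $\delta_0>0$ with $\vec p\in\mathcal{M}_{\delta_0}$, and let $\alpha>0$ be as in Proposition~\ref{overwhelm} (the case $\ell=0$) for this $\delta_0$. The main step is the estimate
$$
H_m(Z)-H_{m+1}(Z)=\hat{O}(\eps^m),
$$
i.e.\ the per-step decrement of the conditional entropies decays exponentially (and this is uniform over $\mathcal{M}_{\delta_0}$). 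Granting it, since $H_m(Z)\downarrow H(Z)$ we have $H(Z)=H_n(Z)-\sum_{m\ge n}\bigl(H_m(Z)-H_{m+1}(Z)\bigr)$, the tail being $\hat{O}(\eps^n)$; choosing $n=n(k)$ with a large enough ``$\eps$-exponent'' gives $H(Z)=H_n(Z)+O(\eps^{k+1})$.

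For the decrement estimate, write
$$
H_m(Z)-H_{m+1}(Z)=\sum_{z_{-m-1}^0}p(z_{-m-1}^0)\,\log\frac{p(z_0|z_{-m-1}^{-1})}{p(z_0|z_{-m}^{-1})}.
$$
Applying the bound $|w\Omega_z\mathbf{1}|\ge C\eps^{O_M}$ from the proof of Lemma~\ref{Lipschitz} to the state distribution $p((x_{-1},c_{-1})=\cdot\,|z_{-m}^{-1})\in\mathcal{W}$, each conditional probability satisfies $C\eps^{O_M}\le p(z_0|z_{-m}^{-1})\le 1$ for $\eps$ small, uniformly in $m$ and in the sequence; hence every log-ratio is at most $C'\log(1/\eps)$ in absolute value. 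I would then split the sum according to whether $z_{-m}^{-1}$ is $\alpha$-typical. On the non-typical part, the total output probability is $\hat{O}(\eps^m)$ by Lemma~\ref{MATH2603}, which absorbs the $\log(1/\eps)$ factor (polynomial in $1/\eps$), so that part is $\hat{O}(\eps^m)$. On the $\alpha$-typical part, Proposition~\ref{overwhelm} with $\ell=0$ (its $n$ taken to be $m$, its two pasts of lengths $m$ and $m+1$) gives $|p(z_0|z_{-m-1}^{-1})-p(z_0|z_{-m}^{-1})|=\hat{O}(\eps^m)$; dividing by $p(z_0|z_{-m}^{-1})\ge C\eps^{O_M}$ and using Remark~\ref{TheSameEps} shows the ratio equals $1+\hat{O}(\eps^m)$, so its logarithm is $\hat{O}(\eps^m)$, and summing against $\sum p(z_{-m-1}^0)\le 1$ finishes the step.

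It remains to analyze the finite sum $H_n(Z)=\sum_{z_{-n}^0}-p(z_{-n}^0)\log p(z_0|z_{-n}^{-1})$ for the fixed $n=n(k)$. By Proposition~\ref{conditional}, each $p(z_0|z_{-n}^{-1})$ is analytic near $\eps=0$ with $\ord(p(z_0|z_{-n}^{-1}))=d\le O_M$, so $p(z_0|z_{-n}^{-1})=\eps^{d}u(\eps)$ with $u$ analytic and $u(0)\ne 0$, whence $\log p(z_0|z_{-n}^{-1})=d\log\eps+\log u(\eps)$ with $\log u$ analytic near $0$. Since $p(z_{-n}^0)=p(z_{-n}^{-1})\,p(z_0|z_{-n}^{-1})$ is analytic with $\ord(p(z_{-n}^0))\ge d$, each summand equals $-p(z_{-n}^0)\log u(\eps)$ (analytic near $0$) minus $d\,p(z_{-n}^0)\log\eps$ (an analytic function of order $\ge 1$ times $\log\eps$, or $0$ when $d=0$). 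Summing, $H_n(Z)=A_n(\eps)+B_n(\eps)\log\eps$ with $A_n,B_n$ analytic near $0$ and $\ord(B_n)\ge 1$; moreover $A_n(0)=H_n(Z)|_{\eps=0}=H(Z)|_{\eps=0}$, because at $\eps=0$ the map $x\mapsto z(x)$ is injective and $X$ is first order, so $H_n(Z)|_{\eps=0}=H(X_0|X_{-1})$ for every $n\ge 1$. Expanding $A_n$ to order $k$ and $B_n$ to order $k+1$ (the remainders being $O(\eps^{k+1})$ and $O(\eps^{k+2}\log\eps)=O(\eps^{k+1})$) and combining with $H(Z)=H_n(Z)+O(\eps^{k+1})$ yields the formula; the coefficients $g_j,f_j$ are the Taylor coefficients of $A_n,B_n$, which are built from $p(z_{-n}^0)$ and $p(z_0|z_{-n}^{-1})$ and hence depend only on $\Pi$ and the $q_c$. (One checks they are in fact independent of the choice of $n$ once $n$ is large, since enlarging $n$ alters $H_n(Z)$ by only $O(\eps^{k+1})$, by the first step.)

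The main obstacle is the decrement estimate $H_m(Z)-H_{m+1}(Z)=\hat{O}(\eps^m)$; the delicate point is bounding the logarithm of the ratio of the two conditional probabilities uniformly in $m$ and over $\mathcal{M}_{\delta_0}$, which is exactly where the exponential contraction of Proposition~\ref{overwhelm}, the uniform lower bound $p(z_0|z_{-m}^{-1})\gtrsim\eps^{O_M}$, and the ``$\hat{O}$ absorbs $\mathrm{poly}(1/\eps)$'' phenomenon of Remark~\ref{TheSameEps} must be combined. This is precisely the content that lets our parameterized family fit the framework of~\cite{hm08}, and the argument runs parallel to the proof of Theorem~2.8 there.
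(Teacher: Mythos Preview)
The paper does not actually give its own proof of this proposition; it is quoted from~\cite{hm08} (the preamble to Propositions~\ref{conditional}--\ref{main} says only that ``the parameterization of $Z$ as a function of $\eps$ fits in the framework of~\cite{hm08} \ldots\ Consequently, we have the following three propositions''). Your proposal goes further and re-derives the result from the machinery developed here, which is valid and more self-contained.

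Your argument is correct. The decrement estimate $H_m(Z)-H_{m+1}(Z)=\hat O(\eps^m)$ is obtained exactly as you outline: split by $\alpha$-typicality of $z_{-m}^{-1}$, use the uniform lower bound $p(z_0|z_{-m}^{-1})\ge C\eps^{O_M}$ to control log-ratios, invoke Proposition~\ref{overwhelm} (its $n$ equal to $m$, with pasts of length $m$ and $m{+}1$) on the typical part, and Lemma~\ref{MATH2603} on the non-typical part; Remark~\ref{TheSameEps} absorbs the stray $\eps^{-O_M}$ and $\log(1/\eps)$ factors. The analysis of the finite sum $H_n(Z)$ via Proposition~\ref{conditional}, writing $p(z_0|z_{-n}^{-1})=\eps^d u(\eps)$ with $u(0)>0$, is also right, as is the observation that $A_n(0)=H(Z)|_{\eps=0}$ because injectivity of $x\mapsto z(x)$ and the Markov property give $H_n(Z)|_{\eps=0}=H(X_0|X_{-1})$ for every $n\ge 1$. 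Stability of the coefficients in $n$ then follows from linear independence of $\{\eps^j,\eps^j\log\eps\}_j$.

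Your route is in fact the same engine that the paper uses in its proof of the stronger Theorem~\ref{main-1}: there the typical/non-typical split and Proposition~\ref{overwhelm} are applied to $F_n^\alpha$ and $G_n^\alpha$ separately (rather than to $H_n$ directly), yielding the finer decomposition $H(Z)=G(\vec p,\eps)+F(\vec p,\eps)\log\eps$ with $G$ smooth and $F$ analytic. So what you have written is essentially the $\ell=0$, single-$\vec p$ shadow of that argument, which is precisely what Proposition~\ref{main} requires.
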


The following theorem strengthens Proposition~\ref{main} in the sense that it describes how the coefficients $f_j$'s and $g_j$'s vary with respect to the input Markov chain. We first introduce some necessary notation. We shall break $H_n(Z)$ into a sum of $G_n(Z)$ and $F_n(Z) \log(\eps)$ where $G_n(Z)=G_n(\vec{p}, \eps)$ and $F_n(Z)=F_n(\vec{p}, \eps)$ are smooth; precisely, we have
$$
H_n(Z)=G_n(\vec{p}, \eps)+F_n(\vec{p}, \eps) \log \eps,
$$
where ($\ord(p(z_0|z_{-n}^{-1}))$ is well-defined since $p(z_0|z_{-n}^{-1})$ is analytic with respect to $\eps$; see Proposition~\ref{conditional})
\begin{equation}
\label{Fn}
F_n(\vec{p}, \eps)=\sum_{z_{-n}^{0}} -\ord(p(z_0|z_{-n}^{-1}))
p(z_{-n}^0)
\end{equation}
and
\begin{equation}
\label{Gn}
G_n(\vec{p}, \eps)=\sum_{z_{-n}^{0}} -p(z_{-n}^0) \log
p^{\circ}(z_0|z_{-n}^{-1}),
\end{equation}
where
$$
p^{\circ}(z_0|z_{-n}^{-1}) = p(z_0|z_{-n}^{-1})/\eps^{\ord(p(z_0|z_{-n}^{-1}))}.
$$

\begin{thm} \label{main-1}
Given $\delta_0 >0$, for sufficiently small $\eps_0$,
\begin{enumerate}
\item On $U_{\delta_0, \eps_0}$, there is an analytic function $F(\vec{p}, \eps)$ and smooth (i.e., infinitely differentiable) function $G(\vec{p}, \eps)$ such that
\begin{equation}   \label{main-1Formula}
H(Z(\vec{p}, \eps))=G(\vec{p}, \eps)+F(\vec{p}, \eps) \log \eps.
\end{equation}
Moreover,
$$
G(\vec{p}, \eps)=H(Z)|_{\eps=0}+\sum_{j=1}^{k} g_j(\vec{p}) \eps^j+O(\eps^{k+1}), \qquad F(\vec{p}, \eps)=\sum_{j=1}^{k} f_j(\vec{p}) \eps^j+O(\eps^{k+1}),
$$
here $f_j$'s and $g_j$'s are the corresponding functions as in Proposition~\ref{main};
\item
Define $\hat{F}(\vec{p}, \eps) = F(\vec{p}, \eps)/\eps$. Then $\hat{F}(\vec{p}, \eps)$ is analytic on $U_{\delta_0, \eps_0}$.
\item
For any $\ell$, there exists $0 < \rho <1$ such that on $U_{\delta_0, \eps_0}$
$$
|D_{\vec{p},\eps}^{\ell} F_n(\vec{p}, \eps) - D_{\vec{p},\eps}^{\ell} F(\vec{p},
\eps)| < \rho^n,
$$
$$
|D_{\vec{p},\eps}^{\ell} \hat{F}_n(\vec{p}, \eps) - D_{\vec{p},\eps}^{\ell} \hat{F}(\vec{p},
\eps)| < \rho^n,
$$
and
$$
|D_{\vec{p},\eps}^{\ell} G_n(\vec{p}, \eps) - D_{\vec{p},\eps}^{\ell} G(\vec{p},
\eps)| < \rho^n,
$$
for sufficiently large $n$.
\end{enumerate}
\end{thm}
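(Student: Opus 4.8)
The plan is to establish parts~(1)--(3) together by showing that the functions $F_n$, $\hat F_n := F_n/\eps$ and $G_n$, along with all of their total derivatives $D_{\vec q}^{\ell}$ in $\vec q = (\vec p,\eps)$, converge uniformly and exponentially on $U_{\delta_0,\eps_0}$ for $\eps_0$ small, and to let $F$, $\hat F$, $G$ be the respective limits; then $H(Z) = \lim_n H_n(Z) = \lim_n (G_n + F_n\log\eps) = G + F\log\eps$ (the identity $H_n = G_n + F_n\log\eps$ holding termwise since $\log p(z_0|z_{-n}^{-1}) = \log p^{\circ}(z_0|z_{-n}^{-1}) + \ord(p(z_0|z_{-n}^{-1}))\log\eps$), and the coefficients $f_j,g_j$ are read off afterwards from Proposition~\ref{main}. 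A preliminary point, needed even to differentiate $F_n$ and $G_n$ in $\vec p$, is that on $\mathcal{M}_{\delta_0}$ each order $\ord(p(z_{-n}^{0}))$, hence each $\ord(p(z_0|z_{-n}^{-1})) = \ord(p(z_{-n}^{0})) - \ord(p(z_{-n}^{-1})) \le O_M$, is a nonnegative integer \emph{independent} of $(\vec p,\eps)$: expanding $p(z_{-n}^{0}) = \pi\Omega_{z_{-n}}\cdots\Omega_{z_0}\mathbf 1$ as a sum over input--state sequences $x_{-n}^{0} \in \mathcal{S}$, the $\eps$-order of a summand is its combinatorial ``cost'' $\sum_i \ord(p(z_i|x_i,c_i))$, and the leading coefficient of a minimal-cost summand is a product of strictly positive numbers (the stationary and allowed transition probabilities of the primitive $\Pi$, and the leading coefficients of the $p(z|x,c)(\eps)$, which are positive since these probabilities are eventually positive in $\eps$), so no cancellation occurs among the minimal-cost summands and $\ord(p(z_{-n}^{0}))$ equals the minimal cost. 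Thus $F_n,G_n$ are well-defined and $C^{\infty}$ on $U_{\delta_0,\eps_0}$, and by Remark~\ref{StabilizingRemark} we may fix $\alpha > 0$ (shrunk below to also satisfy Proposition~\ref{overwhelm}) so that $\ord(p(z_0|z_{-n}^{-1})) = \ord(p(z_0|z_{-n-1}^{-1}))$ whenever $z_{-n}^{-1}$ is $\alpha$-typical and $n$ is large.

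The quantitative engine is a derivative version of Lemma~\ref{MATH2603}: for every $\ell$,
$$
\sum_{z_{-n}^{0}:\ z_{-n}^{-1}\ \text{not}\ \alpha\text{-typical}} \bigl| D_{\vec q}^{\ell} p(z_{-n}^{0}) \bigr| = \hat O(\eps^{n}) \quad \text{on } \mathcal{M}_{\delta_0},
$$
which I would obtain from Lemma~\ref{complex} (whose proof applies verbatim to words of length $n+1$) via the multivariate Cauchy integral formula over a polydisc of radius $\eta$ in the $\vec p$-variables and a fixed small radius in $\eps$ (so nothing blows up as $\eps \to 0$), noting that the order of a sequence is unchanged under a small complex perturbation, so ``real-atypical'' implies ``complex-atypical''; the fixed factor $\eta^{-|\ell|}$ is absorbed by $\hat O$ by Remark~\ref{TheSameEps}. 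Now, marginalizing over $z_{-n-1}$ gives $F_n = \sum_{z_{-n-1}^{0}} -\ord(p(z_0|z_{-n}^{-1}))\,p(z_{-n-1}^{0})$, so $F_{n+1} - F_n = \sum_{z_{-n-1}^{0}} -[\ord(p(z_0|z_{-n-1}^{-1})) - \ord(p(z_0|z_{-n}^{-1}))]\,p(z_{-n-1}^{0})$, where the bracket vanishes on $\alpha$-typical $z_{-n}^{-1}$ and has absolute value at most $2O_M$ otherwise; hence $D_{\vec q}^{\ell}(F_{n+1} - F_n) = \hat O(\eps^{n})$ on $\mathcal{M}_{\delta_0}$, so $D_{\vec q}^{\ell} F_n$ converges uniformly on $U_{\delta_0,\eps_0}$ and $F$ is $C^{\infty}$ (by the standard theorem on term-by-term differentiation). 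Repeating this estimate over a complex neighbourhood --- where only the triangle inequality and Lemma~\ref{complex} are needed, and each $F_n^{\mathbb C}$ is a finite sum of analytic functions because the orders are constants --- shows $F_n^{\mathbb C} \to F^{\mathbb C}$ uniformly on a complex polydisc, so $F$ is analytic, and Cauchy's estimate on the complex limit re-derives the exponential convergence of $D_{\vec q}^{\ell} F_n$. For part~(2), $p(z_{-n}^{0})|_{\eps = 0} = 0$ unless $z_{-n}^{0}$ is $Z$-allowed, in which case its order is $0$; thus $F_n(\vec p, 0) = 0$, so $\hat F_n = F_n/\eps$ is analytic, and dividing the previous estimates by $\eps$ (Remark~\ref{TheSameEps}) shows $\hat F_n^{\mathbb C} \to \hat F^{\mathbb C} = F^{\mathbb C}/\eps$ uniformly, with $\hat F$ analytic and $D_{\vec q}^{\ell} \hat F_n \to D_{\vec q}^{\ell}\hat F$ exponentially.

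For $G_n$ I would work over the reals. Marginalizing again, $G_{n+1} - G_n = \sum_{z_{-n-1}^{0}} -p(z_{-n-1}^{0}) \bigl[ \log p^{\circ}(z_0|z_{-n-1}^{-1}) - \log p^{\circ}(z_0|z_{-n}^{-1}) \bigr]$. On $\alpha$-typical $z_{-n}^{-1}$ the two orders coincide, so the bracket equals $\log\bigl( p(z_0|z_{-n-1}^{-1})/p(z_0|z_{-n}^{-1}) \bigr)$; using the uniform two-sided bounds $0 < \mu \le p^{\circ}(z_0|z_{-n}^{-1}) \le M$ valid over all large $n$ and typical sequences (the weak Black Hole property of~\cite{hm08}), the a priori derivative bounds of the form~(\ref{prelim}) together with Lemma~\ref{Lipschitz}, and Proposition~\ref{overwhelm} with $m = n+1$, $\hat m = n$, a Faa Di Bruno expansion shows $D_{\vec q}^{\ell}$ of this bracket is $\hat O(\eps^{n})$ uniformly over typical sequences. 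On non-$\alpha$-typical $z_{-n}^{-1}$, the Leibniz rule writes $D_{\vec q}^{\ell}$ of the product as a sum of $D_{\vec q}^{\ell_1} p(z_{-n-1}^{0})$ times $D_{\vec q}^{\ell_2}$ of the log-difference, the latter bounded by a fixed power of $1/\eps$ (again by the $p^{\circ}$-bounds and the estimates of the form~(\ref{prelim})) and the former summing to $\hat O(\eps^{n})$ by the displayed estimate. Hence $D_{\vec q}^{\ell}(G_{n+1} - G_n) = \hat O(\eps^{n})$ on $\mathcal{M}_{\delta_0}$, so $D_{\vec q}^{\ell} G_n$ converges uniformly on $U_{\delta_0,\eps_0}$; since $G_n = H_n(Z) - F_n\log\eps \to H(Z) - F\log\eps =: G$, the limit $G$ is $C^{\infty}$ and $H(Z) = G + F\log\eps$. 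Finally, comparing with Proposition~\ref{main} and using uniqueness of a decomposition into a $C^{1}$ part plus an analytic multiple of $\log\eps$ near $\eps = 0$ (if $F_1 - F_2$ is analytic and $(F_1 - F_2)\log\eps$ is $C^{1}$, then $F_1 \equiv F_2$), one reads off $F(\vec p,\eps) = \sum_{j=1}^{k} f_j(\vec p)\eps^{j} + O(\eps^{k+1})$ and $G(\vec p,\eps) = H(Z)|_{\eps = 0} + \sum_{j=1}^{k} g_j(\vec p)\eps^{j} + O(\eps^{k+1})$.

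The main obstacle is the $G_n$ step: making the derivative estimate on the log-difference uniform over all large $n$ and all typical sequences at once, which forces one to interlock Proposition~\ref{overwhelm}, the polynomial-in-$1/\eps$ derivative bounds of the $w_{i,-m}$, and the uniform two-sided bounds on $p^{\circ}(z_0|z_{-n}^{-1})$ through a delicate Faa Di Bruno bookkeeping, while the parallel control of the atypical sequences rests entirely on the Cauchy-estimate upgrade of Lemma~\ref{complex}. By comparison, the constancy of the orders (no cancellation in the leading $\eps$-coefficient) and the analyticity of $F$ and $\hat F$ are routine once that machinery is in place.
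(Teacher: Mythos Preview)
Your proposal is correct and follows essentially the same route as the paper: the same split into $F_n$ and $G_n$, the same typical/atypical dichotomy driven by Lemmas~\ref{MATH2603}--\ref{complex} and Remark~\ref{StabilizingRemark}, complexification plus Cauchy for the analyticity of $F$, and Proposition~\ref{overwhelm} combined with Faa~Di~Bruno for the derivatives of the $\log p^{\circ}$ terms. The only noteworthy organizational difference is that the paper introduces the truncated sums $F_n^{\alpha}, G_n^{\alpha}$ (over $\alpha$-typical $z_{-n}^{-1}$) as intermediaries, first proving $F_n^{\alpha}\to F$ and $G_n^{\alpha}\to G$ (Part~1) and then separately controlling $F_n-F_n^{\alpha}$ and $G_n-G_n^{\alpha}$ via the real-variable bounds (\ref{wordsINequation}), (\ref{Austin-1}), (\ref{Austin-2}) (Part~3), whereas you merge these two passes by estimating $F_{n+1}-F_n$ and $G_{n+1}-G_n$ directly and handling the atypical contribution through a Cauchy-integral upgrade of Lemma~\ref{complex}; both work, and your version is a mild streamlining. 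One small imprecision: the uniform lower bound on $p^{\circ}(z_0|z_{-n}^{-1})$ that the paper actually establishes is $C\eps^{O_M}$ (see (\ref{Austin-3})), not an $\eps$-independent $\mu>0$, but this still feeds into the Faa~Di~Bruno bookkeeping exactly as you need via Remark~\ref{TheSameEps}.
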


\begin{proof}
\textbf{1)}
Recall that
$$
H_n(Z)=\sum_{z_{-n}^0} -p(z_{-n}^0) \log p(z_0|z_{-n}^{-1}).
$$
It follows from a compactness argument that $H_n(Z)$ uniformly converges to $H(Z)$ on the parameter space $U_{\delta_0,\eps_0}$ for any positive $\eps_0$. We now define
$$
H_n^{\alpha}(Z)=\sum_{z_{-n}^{-1} \in T_{n}^{\alpha}, z_0} -p(z_{-n}^0) \log p(z_0|z_{-n}^{-1});
$$
here recall that $T_{n}^{\alpha}$ denotes the set of all $\alpha$-typical $\mathcal{Z}$-sequences with length $n$.
Applying Lemma~\ref{MATH2603}, we deduce that $H_n^{\alpha}(Z)$ uniformly converges to $H(Z)$ on $U_{\delta_0,\eps_0}$ as well.

By Proposition~\ref{conditional}, $p(z_0|z_{-n}^{-1})$ is analytic with $\ord(p(z_0|z_{-n}^{-1})) \leq O_M$. It then follows that for any $\alpha$ with $0 < \alpha < 1$ (we will choose $\alpha$ to be smaller later if necessary),
$$
H_n^{\alpha}(Z)=G_n^{\alpha}(\vec{p}, \eps)+F_n^{\alpha}(\vec{p}, \eps) \log \eps,
$$
where
$$
F_n^{\alpha}(\vec{p}, \eps)=\sum_{z_{-n}^{-1} \in T_{n}^{\alpha}, z_0} -\ord(p(z_0|z_{-n}^{-1})) p(z_{-n}^0),
$$
and
$$
G_n^{\alpha}(\vec{p}, \eps)=\sum_{z_{-n}^{-1} \in T_{n}^{\alpha}, z_0} -p(z_{-n}^0) \log p^{\circ}(z_0|z_{-n}^{-1}).
$$

The idea of the proof is as follows. We first show that $F_n^{\alpha}(\vec{p}, \eps)$ uniformly converges to a real analytic function $F(\vec{p}, \eps)$. We then prove that $G_n^{\alpha}(\vec{p}, \eps)$ and its derivatives with respect to $(\vec{p}, \eps)$ also uniformly converge to a smooth function $G(\vec{p}, \eps)$. Since $H_n^{\alpha}(Z)$ uniformly converges to $H(Z)$, $F(\vec{p}, \eps)$, $G(\vec{p}, \eps)$ satisfy (\ref{main-1Formula}). The ``Moreover'' part then immediately follows by equating (\ref{mainFormula}) and (\ref{main-1Formula}) to compare the coefficients.

We now show that $F_n^{\alpha}(\vec{p}, \eps)$ uniformly converges to a real analytic function $F(\vec{p}, \eps)$. Now
$$
|F_n^{\alpha}(\vec{p}, \eps)-F_{n+1}^{\alpha}(\vec{p}, \eps)|=\left| \sum_{z_{-n}^{-1} \in T_{n}^{\alpha}, z_0} \ord(p(z_0|z_{-n}^{-1})) p(z_{-n}^0)-\sum_{z_{-n-1}^{-1} \in T_{n+1}^{\alpha}, z_0} \ord(p(z_0|z_{-n-1}^{-1})) p(z_{-n-1}^0) \right|
$$
$$
=\left| \left(\sum_{z_{-n}^{-1} \in T_{n}^{\alpha}, z_{-n-1}^{-1} \in T_{n+1}^{\alpha}, z_0}+\sum_{z_{-n}^{-1} \in T_{n}^{\alpha}, z_{-n-1}^{-1} \not \in T_{n+1}^{\alpha}, z_0} \right) \ord(p(z_0|z_{-n}^{-1})) p(z_{-n-1}^0) \right.
$$
$$
-\left. \left(\sum_{z_{-n}^{-1} \in T_{n}^{\alpha}, z_{-n-1}^{-1} \in T_{n+1}^{\alpha}, z_0}+\sum_{z_{-n}^{-1} \not \in T_{n}^{\alpha}, z_{-n-1}^{-1} \in T_{n+1}^{\alpha}, z_0} \right) \ord(p(z_0|z_{-n-1}^{-1})) p(z_{-n-1}^0) \right|.
$$
By Remark~\ref{StabilizingRemark}, we have
$$
|F_n^{\alpha}(\vec{p}, \eps)-F_{n+1}^{\alpha}(\vec{p}, \eps)|=\left| \sum_{z_{-n}^{-1} \in T_{n}^{\alpha}, z_{-n-1}^{-1} \not \in T_{n+1}^{\alpha}, z_0} \ord(p(z_0|z_{-n}^{-1})) p(z_{-n-1}^0) \right.
$$
$$
-\left. \sum_{z_{-n}^{-1} \not \in T_{n}^{\alpha}, z_{-n-1}^{-1} \in T_{n+1}^{\alpha}, z_0} \ord(p(z_0|z_{-n-1}^{-1})) p(z_{-n-1}^0) \right|.
$$
Applying Lemma~\ref{MATH2603}, we have
\begin{equation} \label{F}
|F_n^{\alpha}(\vec{p}, \eps)-F_{n+1}^{\alpha}(\vec{p}, \eps)| = \hat{O}(\eps^n) \mbox{ on } \mathcal{M}_{\delta_0},
\end{equation}
which implies that there exists $\eps_0 > 0$ such that $F_n^{\alpha}(\vec{p}, \eps)$ are exponentially Cauchy and thus uniformly converges on $U_{\delta_0, \eps_0}$ to a continuous function $F(\vec{p}, \eps)$.

Let $F_n^{\alpha, \mathbb{C}}(\vec{p}, \eps)$ denote the complexified $F_n^{\alpha}(\vec{p}, \eps)$ on $(\vec{p}, \eps)$ with $\vec{p} \in \mathcal{M}_{\delta_0}^{\mathbb{C}}(\eta_0)$ and $|\eps| \leq \eps_0$. Then, using Lemma~\ref{complex} and a similar argument as above, we can prove that
\begin{equation} \label{FC}
|F_n^{\alpha, \mathbb{C}}(\vec{p}, \eps)-F_{n+1}^{\alpha, \mathbb{C}}(\vec{p}, \eps)| = \hat{O}(|\eps|^n) \mbox{ on } \mathcal{M}_{\delta_0}^{\mathbb{C}}(\eta_0);
\end{equation}
in other words, for some $\eta_0, \eps_0 > 0$, $F_n^{\alpha, \mathbb{C}}(\vec{p}, \eps)$ are exponentially Cauchy and thus uniformly converges on all $(\vec{p}, \eps)$ with $\vec{p} \in \mathcal{M}_{\delta_0}^{\mathbb{C}}(\eta_0)$ and $|\eps| \leq \eps_0$. Therefore, $F(\vec{p}, \eps)$ is analytic with respect to $(\vec{p}, \eps)$ on $U_{\delta_0, \eps_0}$.

We now prove that $G_n^{\alpha}(\vec{p}, \eps)$ and its derivatives with respect to $(\vec{p}, \eps)$ uniformly converge to a smooth function $G^{\alpha}(\vec{p}, \eps)$ and its derivatives.

Although the convergence of $G_n^{\alpha}(\vec{p}, \eps)$ and its derivatives can be proven through the same argument at once, we first prove the convergence of $G_n^{\alpha}(\vec{p}, \eps)$ only for illustrative purpose.

For any $\alpha, \beta > 0$, we have
\begin{equation} \label{ine-1}
|\log \alpha - \log \beta| \leq \max\{|(\alpha-\beta)/\beta|, |(\alpha-\beta)/\alpha|\}.
\end{equation}
Note that the following is contained in Proposition~\ref{overwhelm} ($\ell=0$)
\begin{equation} \label{ine-2}
|p^{\circ}(z_0|z_{-n}^{-1})-p^{\circ}(z_0|z_{-n-1}^{-1})| = \hat{O}(\eps^n) \mbox{ on } \mathcal{M}_{\delta_0} \times T_{n, n, n+1}^{\alpha}.
\end{equation}
One further checks that by Proposition~\ref{conditional}, there exists a positive constant $C$ such that for $\eps$ small enough and for any sequence $z_{-n}^{-1}$,
$$
p(z_0|z_{-n}^{-1}) \geq C\eps^{O_M},
$$
and thus,
\begin{equation}  \label{Austin-3}
p^{\circ}(z_0|z_{-n}^{-1}) \geq C \eps^{O_M}.
\end{equation}

Using (\ref{ine-1}), (\ref{ine-2}), (\ref{Austin-3}) and Lemma~\ref{MATH2603}, we have
$$
|G_n^{\alpha}(\vec{p}, \eps)-G_{n+1}^{\alpha}(\vec{p}, \eps)|=\left| \sum_{z_{-n}^{-1} \in T_{n}^{\alpha}, z_0} -p(z_{-n}^0) \log p^{\circ}(z_0|z_{-n}^{-1})-\sum_{z_{-n-1}^{-1} \in T_{n+1}^{\alpha}, z_0} -p(z_{-n-1}^0) \log p^{\circ}(z_0|z_{-n-1}^{-1}) \right|
$$
$$
=\left| \left(\sum_{z_{-n}^{-1} \in T_{n}^{\alpha}, z_{-n-1}^{-1} \in T_{n+1}^{\alpha}, z_0}+\sum_{z_{-n}^{-1} \in T_{n}^{\alpha}, z_{-n-1}^{-1} \not \in T_{n+1}^{\alpha}, z_0} \right) -p(z_{-n-1}^0) \log p^{\circ}(z_0|z_{-n}^{-1}) \right.
$$
$$
-\left. \left(\sum_{z_{-n}^{-1} \in T_{n}^{\alpha}, z_{-n-1}^{-1} \in T_{n+1}^{\alpha}, z_0}+\sum_{z_{-n}^{-1} \not \in T_{n}^{\alpha}, z_{-n-1}^{-1} \in T_{n+1}^{\alpha}, z_0} \right) -p(z_{-n-1}^0) \log p^{\circ}(z_0|z_{-n-1}^{-1}) \right|
$$
$$
\leq \left| \sum_{z_{-n}^{-1} \in T_{n}^{\alpha}, z_{-n-1}^{-1} \in T_{n+1}^{\alpha}, z_0} -p(z_{-n-1}^0) (\log p^{\circ}(z_0|z_{-n}^{-1})-\log p^{\circ}(z_0|z_{-n-1}^{-1})) \right|
$$
$$
+ \left| \sum_{z_{-n}^{-1} \in T_{n}^{\alpha}, z_{-n-1}^{-1} \not \in T_{n+1}^{\alpha}, z_0} -p(z_{-n-1}^0) \log p^{\circ}(z_0|z_{-n}^{-1}) \right| + \left| \sum_{z_{-n}^{-1} \not \in T_{n}^{\alpha}, z_{-n-1}^{-1} \in T_{n+1}^{\alpha}, z_0} -p(z_{-n-1}^0) \log p^{\circ}(z_0|z_{-n-1}^{-1}) \right|
$$
$$
\hspace{-1cm} \leq \sum_{z_{-n}^{-1} \in T_{n}^{\alpha}, z_{-n-1}^{-1} \in T_{n+1}^{\alpha}, z_0} p(z_{-n-1}^0) \max \left\{\left| \frac{ p^{\circ}(z_0|z_{-n}^{-1})-p^{\circ}(z_0|z_{-n-1}^{-1})}{p^{\circ}(z_0|z_{-n-1}^{-1})} \right|, \left| \frac{ p^{\circ}(z_0|z_{-n}^{-1})-p^{\circ}(z_0|z_{-n-1}^{-1})}{p^{\circ}(z_0|z_{-n}^{-1})} \right| \right\}
$$
\begin{equation}  \label{for-zero}
\hspace{-2cm} + \left| \sum_{z_{-n}^{-1} \in T_{n}^{\alpha}, z_{-n-1}^{-1} \not \in T_{n+1}^{\alpha}, z_0} -p(z_{-n-1}^0) \log p^{\circ}(z_0|z_{-n}^{-1}) \right| + \left| \sum_{z_{-n}^{-1} \not \in T_{n}^{\alpha}, z_{-n-1}^{-1} \in T_{n+1}^{\alpha}, z_0} -p(z_{-n-1}^0) \log p^{\circ}(z_0|z_{-n-1}^{-1}) \right| = \hat{O}(\eps^n) \mbox{ on } \mathcal{M}_{\delta_0},
\end{equation}
which implies that there exists $\eps_0 > 0$ such that $G_n^{\alpha}(\vec{p}, \eps)$ uniformly converges on $U_{\delta_0, \eps_0}$, then the existence of $G(\vec{p}, \eps)$ immediately follows.

Apply the multivariate Faa Di Bruno formula~\cite{co96, le96} to the function $f(y)=\log y$, we have for $\vec{l}$ with $|\vec{l}| \neq 0$,
$$
f(y)^{(\vec{l})}=\sum
D(\vec{a}_1, \vec{a}_2, \cdots, \vec{a}_k) (y^{(\vec{a}_1)}/y) (y^{(\vec{a}_2)}/y) \cdots (y^{(\vec{a}_k)}/y),
$$
where the summation is over the set of unordered sequences of non-negative vectors $\vec{a}_1, \vec{a}_2,
\cdots, \vec{a}_k$ with $\vec{a}_1+\vec{a}_2+\cdots+\vec{a}_k=\vec{l}$ and $D(\vec{a}_1, \vec{a}_2, \cdots, \vec{a}_k)$ is the corresponding coefficient. Then for any $\vec{m}$, applying the multivariate Leibnitz rule, we have
$$
(G_n^{\alpha})^{(\vec{m})}(\vec{p}, \eps)=\sum_{z_{-n}^{-1} \in T_{n}^{\alpha}, z_0} \sum_{\vec{l} \preceq \vec{m}}-C_{\vec{m}}^{\vec{l}} p^{(\vec{m}-\vec{l})}(z_{-n}^0) (\log p^{\circ}(z_0|z_{-n}^{-1}))^{(\vec{l})}
$$
$$
\hspace{-1cm}
=\sum_{z_{-n}^{-1} \in T_{n}^{\alpha}, z_0} \sum_{|\vec{l}| \neq 0, \vec{l} \preceq \vec{m}} \sum_{\vec{a}_1+\vec{a}_2+\cdots+\vec{a}_k=\vec{l}} -C_{\vec{m}}^{\vec{l}} D(\vec{a}_1, \cdots, \vec{a}_k) p^{(\vec{m}-\vec{l})}(z_{-n}^0) \frac{p^{\circ}(z_0|z_{-n}^{-1})^{(\vec{a}_1)}}{p^{\circ}(z_0|z_{-n}^{-1})} \cdots \frac{p^{\circ}(z_0|z_{-n}^{-1})^{(\vec{a}_k)}}{p^{\circ}(z_0|z_{-n}^{-1})}
$$
\begin{equation}  \label{LastFirst}
+\sum_{z_{-n}^{-1} \in T_{n}^{\alpha}, z_0} -p^{(\vec{m})}(z_{-n}^0) \log p^{\circ}(z_0|z_{-n}^{-1}).
\end{equation}

We tackle the last term of (\ref{LastFirst}) first. Using (\ref{ine-1}) and (\ref{ine-2}) and with a parallel argument obtained through replacing $p(z_{-n}^0), p(z_{-n-1}^0)$ in (\ref{for-zero}) by $p^{(\vec{m})}(z_{-n}^0), p^{(\vec{m})}(z_{-n-1}^0)$, respectively, we can show that
$$
\hspace{-1cm} \left| \sum_{z_{-n}^{-1} \in T_{n}^{\alpha}, z_0} -p^{(\vec{m})}(z_{-n}^0) \log p^{\circ}(z_0|z_{-n}^{-1})-\sum_{z_{-n-1}^{-1} \in T_{n+1}^{\alpha}, z_0} -p^{(\vec{m})}(z_{-n-1}^0) \log p^{\circ}(z_0|z_{-n-1}^{-1}) \right| = \hat{O}(\eps^n) \mbox{ on } \mathcal{M}_{\delta_0} \times T_{n, n, n+1}^{\alpha},
$$
where we used the fact that for any $z_{-n}^0$ and $\vec{m}$, $p^{(\vec{m})}(z_{-n}^0)/p(z_{-n}^0)$ is $O(n^{|\vec{m}|}/ {\eps}^{|\vec{m}|})$ (see (\ref{wordsINequation})).
And using the identity
$$
\alpha_1 \alpha_2 \cdots \alpha_n - \beta_1 \beta_2 \cdots \beta_n=(\alpha_1-\beta_1) \alpha_2 \cdots \alpha_n+
\beta_1 (\alpha_2-\beta_2) \alpha_3 \cdots \alpha_n+\cdots+\beta_1 \cdots \beta_{n-1} (\alpha_n-\beta_n),
$$
we have
$$
\left| \frac{p^{\circ}(z_0|z_{-n}^{-1})^{(\vec{a}_1)}}{p^{\circ}(z_0|z_{-n}^{-1})} \cdots \frac{p^{\circ}(z_0|z_{-n}^{-1})^{(\vec{a}_k)}}{p^{\circ}(z_0|z_{-n}^{-1})}- \frac{p^{\circ}(z_0|z_{-n-1}^{-1})^{(\vec{a}_1)}}{p^{\circ}(z_0|z_{-n-1}^{-1})} \cdots \frac{p^{\circ}(z_0|z_{-n-1}^{-1})^{(\vec{a}_k)}}{p^{\circ}(z_0|z_{-n-1}^{-1})}\right|
$$
$$
\leq \left| \left(\frac{p^{\circ}(z_0|z_{-n}^{-1})^{(\vec{a}_1)}}{p^{\circ}(z_0|z_{-n}^{-1})}-\frac{p^{\circ}(z_0|z_{-n-1}^{-1})^{(\vec{a}_1)}}{p^{\circ}(z_0|z_{-n-1}^{-1}) }\right) \frac{p^{\circ}(z_0|z_{-n}^{-1})^{(\vec{a}_2)}}{p^{\circ}(z_0|z_{-n}^{-1})} \cdots \frac{p^{\circ}(z_0|z_{-n}^{-1})^{(\vec{a}_k)}}{p^{\circ}(z_0|z_{-n}^{-1})} \right|
$$
$$
+\left| \frac{p^{\circ}(z_0|z_{-n-1}^{-1})^{(\vec{a}_1)}}{p^{\circ}(z_0|z_{-n-1}^{-1})} \left(\frac{p^{\circ}(z_0|z_{-n}^{-1})^{(\vec{a}_2)}}{p^{\circ}(z_0|z_{-n}^{-1})}-\frac{p^{\circ}(z_0|z_{-n-1}^{-1})^{(\vec{a}_2)}}{p^{\circ}(z_0|z_{-n-1}^{-1})} \right) \frac{p^{\circ}(z_0|z_{-n}^{-1})^{(\vec{a}_3)}}{p^{\circ}(z_0|z_{-n}^{-1})} \cdots \frac{p^{\circ}(z_0|z_{-n}^{-1})^{(\vec{a}_k)}}{p^{\circ}(z_0|z_{-n}^{-1})} \right|+\cdots
$$
$$
+ \left| \frac{p^{\circ}(z_0|z_{-n-1}^{-1})^{(\vec{a}_1)}}{p^{\circ}(z_0|z_{-n-1}^{-1})} \cdots \frac{p^{\circ}(z_0|z_{-n-1}^{-1})^{(\vec{a}_{k-1})}}{p^{\circ}(z_0|z_{-n-1}^{-1})} \left(\frac{p^{\circ}(z_0|z_{-n}^{-1})^{(\vec{a}_k)}}{p^{\circ}(z_0|z_{-n}^{-1})}-\frac{p^{\circ}(z_0|z_{-n-1}^{-1})^{(\vec{a}_k)}}{p^{\circ}(z_0|z_{-n-1}^{-1})}\right) \right|.
$$

Now apply the inequality
$$
\left| \frac{\beta_1}{\alpha_1}-\frac{\beta_2}{\alpha_2} \right|= \left| \frac{\beta_1}{\alpha_1}-\frac{\beta_1}{\alpha_2}+\frac{\beta_1}{\alpha_2}-\frac{\beta_2}{\alpha_2} \right| \leq |\beta_1/(\alpha_1 \alpha_2)| |\alpha_1-\alpha_2|+|1/\alpha_2| |\beta_1-\beta_2|,
$$
we have for any $1 \leq i \leq k$,
$$
\left| \frac{p^{\circ}(z_0|z_{-n}^{-1})^{(\vec{a}_i)}}{p^{\circ}(z_0|z_{-n}^{-1})}-\frac{p^{\circ}(z_0|z_{-n-1}^{-1})^{(\vec{a}_i)}}{p^{\circ}(z_0|z_{-n-1}^{-1})} \right|
$$
$$
\leq \left| \frac{p^{\circ}(z_0|z_{-n}^{-1})^{(\vec{a}_i)}}{p^{\circ}(z_0|z_{-n}^{-1})p^{\circ}(z_0|z_{-n-1}^{-1})} \right| |p^{\circ}(z_0|z_{-n}^{-1})-p^{\circ}(z_0|z_{-n-1}^{-1})| +\left| \frac{1}{p^{\circ}(z_0|z_{-n-1}^{-1})} \right| |p^{\circ}(z_0|z_{-n}^{-1})^{(\vec{a}_i)}-p^{\circ}(z_0|z_{-n-1}^{-1})^{(\vec{a}_i)}|.
$$

It follows from multivirate Leibnitz rule and (\ref{prelim}) that there exists a positive constant $C_{\vec{a}}$ such that
\begin{equation}  \label{Austin-1}
|p(z_0|z_{-n}^{-1})^{(\vec{a})}| = |(w_{-1, -n} \Omega_{z_0} \mathbf{1})^{(\vec{a})}| \leq n^{|\vec{a}|} C_{\vec{a}}/\eps^{|\vec{a}|},
\end{equation}
and furthermore there exists a positive constant $C^{\circ}_{\vec{a}}$ such that for any $z_{-n}^{-1} \in \mathcal{Z}^n$,
\begin{equation}  \label{Austin-2}
p^{\circ}(z_0|z_{-n}^{-1})^{(\vec{a})} \leq n^{|\vec{a}|} C^{\circ}_{\vec{a}}/\eps^{|\vec{a}|+O_M}.
\end{equation}
Combining (\ref{Austin-3}), (\ref{LastFirst}), (\ref{Austin-1}) and (\ref{Austin-2}) gives us
\begin{equation} \label{G}
|(G_n^{\alpha})^{(\vec{m})}(\vec{p}, \eps)-(G_{n+1}^{\alpha})^{(\vec{m})}(\vec{p}, \eps)| = \hat{O}(\eps^n) \mbox{ on } \mathcal{M}_{\delta_0}.
\end{equation}
This implies that there exists $\eps_0 > 0$ such that $G_n^{\alpha}(\vec{p}, \eps)$ and its derivatives with respect to $(\vec{p}, \eps)$ uniformly converge on $U_{\delta_0, \eps_0}$ to a smooth function $G(\vec{p}, \eps)$ and correspondingly its derivatives (Here, by Remark~\ref{TheSameEps}, $\eps_0$ does not depend on $\vec{m}$).

\textbf{2)} It immediately follows from analyticity of $F(\vec{p}, \eps)$ and the fact that $\ord{F(\vec{p}, \eps)} \geq 1$.

\textbf{3)}
Note that,
$$
F_n(\vec{p}, \eps)-F_n^{\alpha}(\vec{p}, \eps)= \sum_{z_{-n}^{-1} \not \in T_{n}^{\alpha}, z_0} -\ord(p(z_0|z_{-n}^{-1})) p(z_{-n}^0).
$$

Apply the multivariate Leibnitz rule, then by Proposition~\ref{conditional}, (\ref{Austin-1}), (\ref{wordsINequation}) and Lemma~\ref{MATH2603}, we have for any $\ell$,
$$
\left|D_{\vec{p}, \eps}^{\ell} F_n(\vec{p}, \eps)-D_{\vec{p}, \eps}^{\ell} F_n^{\alpha}(\vec{p}, \eps) \right|=\left| \sum_{z_{-n}^{-1} \not \in T_{n}^{\alpha}, z_0} -\ord(p(z_0|z_{-n}^{-1})) D_{\vec{p}, \eps}^{\ell} (p(z_0|z_{-n}^{-1}) p(z_{-n}^{-1})) \right|=\hat{O}(\eps^n) \mbox{ on } \mathcal{M}_{\delta_0}.
$$
It follows from (\ref{FC}) and the Cauchy integral formula that
$$
\left|D_{\vec{p}, \eps}^{\ell} F_{n+1}^{\alpha}(\vec{p}, \eps)-D_{\vec{p}, \eps}^{\ell} F_n^{\alpha}(\vec{p}, \eps) \right|= \hat{O}(\eps^n) \mbox{ on } \mathcal{M}_{\delta_0},
$$
we then have
$$
\left|D_{\vec{p}, \eps}^{\ell} F_{n+1}(\vec{p}, \eps)-D_{\vec{p}, \eps}^{\ell} F_n(\vec{p}, \eps) \right|= \hat{O}(\eps^n) \mbox{ on } \mathcal{M}_{\delta_0},
$$
and thus
$$
\left|D_{\vec{p}, \eps}^{\ell} F_{n}(\vec{p}, \eps)-D_{\vec{p}, \eps}^{\ell} F(\vec{p}, \eps) \right|= \hat{O}(\eps^n) \mbox{ on } \mathcal{M}_{\delta_0},
$$
which imply that for any $\ell$, there exist $\eps_0 > 0$, $0 < \rho <1$ such that on $U_{\delta_0, \eps_0}$
$$
|D_{\vec{p},\eps}^{\ell} F_n(\vec{p}, \eps) - D_{\vec{p},\eps}^{\ell} F(\vec{p},
\eps)| < \rho^n,
$$
and further
$$
|D_{\vec{p},\eps}^{\ell} \hat{F}_n(\vec{p}, \eps) - D_{\vec{p},\eps}^{\ell} \hat{F}(\vec{p},
\eps)| < \rho^n,
$$
for sufficiently large $n$.

Similarly note that
$$
G_n(\vec{p}, \eps)-G_n^{\alpha}(\vec{p}, \eps)=\sum_{z_{-n}^{-1} \not \in T_{n}^{\alpha}, z_0} -p(z_{-n}^0) \log p^{\circ}(z_0|z_{-n}^{-1}).
$$
Then by (\ref{Austin-1}), (\ref{Austin-2}), (\ref{Austin-3}) and Lemma~\ref{MATH2603}, we have for any $\ell$,
$$
\left|D_{\vec{p}, \eps}^{\ell} G_n(\vec{p}, \eps)-D_{\vec{p}, \eps}^{\ell} G_n^{\alpha}(\vec{p}, \eps) \right|
$$
$$
=\left| \sum_{z_{-n}^{-1} \not \in T_{n}^{\alpha}, z_0} D_{\vec{p}, \eps}^{\ell}(-p(z_{-n}^{-1}) p(z_0|z_{-n}^{-1}) \log p^{\circ}(z_0|z_{-n}^{-1}))\right|=\hat{O}(\eps^n) \mbox{ on } \mathcal{M}_{\delta_0},
$$
which, together with (\ref{G}), implies that for any $\ell$, there exists $\eps_0 > 0$, $0 < \rho <1$ such that on $U_{\delta_0, \eps_0}$
$$
|D_{\vec{p},\eps}^{\ell} G_n(\vec{p}, \eps) - D_{\vec{p},\eps}^{\ell} G(\vec{p},
\eps)| < \rho^n,
$$
for sufficiently large $n$.

\end{proof}

\begin{rem}
We don't know if $G(\vec{p}, \eps)$ is analytic or not with respect to $(\vec{p}, \eps)$.
\end{rem}

\section{Concavity of Mutual Information} \label{CMI}

Recall that we are considering a parameterized family of finite-state memoryless channels with
inputs restricted to a mixing finite-type constraint $\mathcal{S}$. Again for simplicity, we assume
that $\mathcal{S}$ has order 1.

For parameter value $\eps$, the channel capacity is the supremum of the mutual information of $Z(X,\eps)$ and $X$ over all stationary input processes $X$ such that $A(X) \subseteq \mathcal{S}$. Here, we use only first order Markov input processes. While this will typically not achieve the true capacity, one can approach capacity by using Markov input processes of higher order. As in Section~\ref{asymptotics}, we identify a first order input Markov process $X$ with its joint probability vector $\vec{p} = \vec{p}_X \in \mathcal{M}$, and we write $Z = Z(\vec{p},\eps)$, thereby sometimes notationally suppressing dependence on $X$ and $\eps$.

Precisely, the {\em first order capacity} is
\begin{equation}  \label{1-order}
C^1 (\eps)=\sup_{\vec{p} \in \mathcal{M}} I(Z;X) = \sup_{\vec{p} \in \mathcal{M}} (H(Z) - H(Z|X))
\end{equation}
and its $n$-th approximation
\begin{equation}  \label{n-1-order}
C^1_n(\eps) =\sup_{\vec{p} \in \mathcal{M}} I_n(Z;X) =
\sup_{\vec{p} \in \mathcal{M}} \left( H_n(Z)-\frac{1}{n+1}H(Z_{-n}^0|X_{-n}^0) \right).
\end{equation}
As mentioned earlier, since the channel is memoryless, the second terms in (\ref{1-order}) and (\ref{n-1-order}) both reduce to $H(Z_0|X_0)$, which can be written as:
$$
\sum_{x \in \mathcal{X}, z \in \mathcal{Z}} -p(x) \sum_{c \in \mathcal{C}} p(c) p(z|x, c)  \log \sum_{c \in \mathcal{C}} p(c) p(z|x, c).
$$
Note that this expression is a linear function of $\vec{p}$ and for all $\vec{p}$ it vanishes when $\eps =0$.
Using this and the fact that for a mixing finite-type constraint there is a unique Markov chain of maximal
entropy supported on the constraint~\cite{pa64}, one can show that for sufficiently small $\eps_1 >0, \delta_1 >0$ and all $0 \le \eps \le \eps_1$,
\begin{equation}  \label{order-n}
\hspace{-0.6cm} C^1_n(\eps)=\sup_{\vec{p} \in \mathcal{M}_{\delta_1}} (H_n(Z)-H(Z_0|X_0)) > \sup_{\vec{p} \in \mathcal{M} \backslash \mathcal{M}_{\delta_1}} (H_n(Z)-H(Z_0|X_0)),
\end{equation}
\begin{equation}  \label{order-infinite}
\hspace{-0.6cm} C^1(\eps) =\sup_{\vec{p} \in \mathcal{M}_{\delta_1}} (H(Z)-H(Z_0|X_0)) >\sup_{\vec{p} \in \mathcal{M} \backslash \mathcal{M}_{\delta_1}} (H(Z)-H(Z_0|X_0)).
\end{equation}

\begin{thm}  \label{Concavity}
There exist $\eps_0 >0, \delta_0>0$ such that for all
$0 \le \eps \le \eps_0$,
\begin{enumerate}
\item
the functions $I_n(Z(\vec{p}, \eps);X(\vec{p}))$ and
$I(Z(\vec{p}, \eps);X(\vec{p}))$ are strictly concave
on $\mathcal{M}_{\delta_0}$, with unique maximizing $\vec{p}_n(\eps)$ and $\vec{p}_\infty(\eps)$;
\item
the functions $I_n(Z(\vec{p}, \eps);X(\vec{p}))$ and
$I(Z(\vec{p}, \eps);X(\vec{p}))$ uniquely achieve their maxima
on all of $\mathcal{M}$ at $\vec{p}_n(\eps)$ and $\vec{p}_\infty(\eps)$;
\item
there exists $0 < \rho <1$ such that
$$
|\vec{p}_{n}(\eps)-\vec{p}_{\infty}(\eps)| \le \rho^n.
$$
\end{enumerate}
\end{thm}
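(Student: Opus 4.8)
The plan is to reduce the whole statement to properties of $H_n(Z)$ and $H(Z)$ as functions of $\vec p$, using that, as noted after (\ref{n-1-order}), the subtracted term $H(Z_0|X_0)$ in both $I_n$ and $I$ is affine in $\vec p$; hence $D_{\vec p}^2 I_n=D_{\vec p}^2 H_n(Z)$ and $D_{\vec p}^2 I=D_{\vec p}^2 H(Z)$, and at an interior critical point the $\vec p$-gradients of $I_n,I$ along $\mathcal M$ coincide with those of $H_n(Z),H(Z)$. I would work throughout in coordinates on the affine constraint space containing $\mathcal M_0$, so that "$D_{\vec p}$'' refers to the gradient/Hessian of the restriction to $\mathcal M$.

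First I would establish uniform strict concavity. Write $H_n(Z)=G_n(\vec p,\eps)+F_n(\vec p,\eps)\log\eps$ and $H(Z)=G(\vec p,\eps)+F(\vec p,\eps)\log\eps$ as in Theorem~\ref{main-1}, with $F_n=\eps\hat F_n$ and $F=\eps\hat F$, so that
$$
D_{\vec p}^2 H_n(Z)=D_{\vec p}^2 G_n(\vec p,\eps)+\eps\log\eps\,D_{\vec p}^2\hat F_n(\vec p,\eps),
$$
and likewise without the subscript $n$. At $\eps=0$ the output is the injective noiseless image of the input, so $G_n(\vec p,0)=H_n(Z)|_{\eps=0}=H(X_0|X_{-1})=:h(\vec p)$ for every $n\ge1$, and $G(\vec p,0)=h(\vec p)$ as well (by the "Moreover'' part of Theorem~\ref{main-1} together with Proposition~\ref{main}). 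The key classical input is that $h$, the first-order entropy rate written in the joint-distribution coordinates $\vec p$, is strictly concave on $\mathcal M_0$ (concavity of conditional entropy in the joint law, with strictness on the constraint space coming from primitivity of $\Pi$); consequently, after fixing any $\delta_0>0$, its Hessian is negative definite on the compact set $\overline{\mathcal M_{\delta_0}}$ with all eigenvalues at most $-c$ for some $c=c(\delta_0)>0$. I would then choose $\eps_0>0$ so small that $D_{\vec p}^2 H_n(Z)$ has all eigenvalues at most $-c/2$ on $U_{\delta_0,\eps_0}$, simultaneously for every $n$ and for $H(Z)$: this uses smoothness of $G,\hat F,G_n,\hat F_n$ in $(\vec p,\eps)$, the bound $|\eps\log\eps|\le1$ near $0$, and --- crucially --- the exponential estimates $|D_{\vec p,\eps}^2(G_n-G)|,\,|D_{\vec p,\eps}^2(\hat F_n-\hat F)|<\rho^n$ of Theorem~\ref{main-1}(3), which handle all large $n$ at once (replacing $G_n,\hat F_n$ by their $n$-free limits), while the finitely many small $n$ are handled by plain continuity and compactness and by intersecting the resulting $\eps_0$'s. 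This yields strict concavity of $I_n$ and $I$ on the convex set $\mathcal M_{\delta_0}$ for all $0\le\eps\le\eps_0$, hence at most one maximizer of each there.

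Next I would obtain existence and globality. Shrinking $\eps_0$ to be $\le\eps_1$ and taking $\delta_0=\delta_1$, inequalities (\ref{order-n}) and (\ref{order-infinite}) show that $I_n$ and $I$, which are continuous on the compact set $\mathcal M$, attain their maxima over $\mathcal M$ only inside the open set $\mathcal M_{\delta_0}$; combined with the strict concavity above this produces unique maximizers $\vec p_n(\eps),\vec p_\infty(\eps)\in\mathcal M_{\delta_0}$ over $\mathcal M_{\delta_0}$ that are simultaneously the unique maximizers over all of $\mathcal M$. This is 1) and 2). For 3), interiority gives $D_{\vec p}I_n(\vec p_n,\eps)=0=D_{\vec p}I(\vec p_\infty,\eps)$; writing $I_n-I=(G_n-G)+\eps(\hat F_n-\hat F)\log\eps$ and invoking Theorem~\ref{main-1}(3) once more gives $|D_{\vec p}(I_n-I)|\le C\rho^n$ on $U_{\delta_0,\eps_0}$, so
$$
|D_{\vec p}I(\vec p_n,\eps)|=|D_{\vec p}I(\vec p_n,\eps)-D_{\vec p}I_n(\vec p_n,\eps)|\le C\rho^n.
$$
Applying the mean value theorem to $D_{\vec p}I(\cdot,\eps)$ along the segment from $\vec p_\infty$ to $\vec p_n$ in the convex set $\mathcal M_{\delta_0}$, and using that $D_{\vec p}^2 I$ there is negative definite with eigenvalues at most $-c/2$, hence invertible with inverse of operator norm at most $2/c$, I obtain $|\vec p_n(\eps)-\vec p_\infty(\eps)|\le\frac2c\,|D_{\vec p}I(\vec p_n,\eps)|\le\frac{2C}{c}\rho^n$, which lies below $\rho_1^{\,n}$ for any fixed $\rho<\rho_1<1$ once $n$ is large; a further adjustment of $\rho_1$ absorbs the finitely many small $n$.

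The substantive analytic work --- the asymptotics of $H(Z)$ and, above all, the exponential-in-$n$ convergence of all $(\vec p,\eps)$-derivatives of $F_n,\hat F_n,G_n$ to those of $F,\hat F,G$ --- is already in hand from Theorem~\ref{main-1}, so the remaining difficulty is mostly organizational: securing a single pair $(\delta_0,\eps_0)$ that works for all finite $n$ and for $n=\infty$ simultaneously. For large $n$ this is exactly what Theorem~\ref{main-1}(3) buys; for the finitely many small $n$ one falls back on continuity and compactness and takes an intersection. The one other point needing care is keeping the maximizers in the interior of the parameter region --- precisely the content of (\ref{order-n}) and (\ref{order-infinite}) --- so that one may argue with vanishing gradients rather than with Karush--Kuhn--Tucker conditions on $\partial\mathcal M$.
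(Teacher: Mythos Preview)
Your proposal is correct and follows essentially the same route as the paper for Parts 1 and 2: the decomposition $H(Z)=G+\eps\hat F\log\eps$, the fact that $G(\vec p,0)=H(X)$ has negative-definite Hessian on $\mathcal M_{\delta_0}$ (the paper cites~\cite{hm06a} for this rather than arguing it directly), continuity in $\eps$ to propagate negative-definiteness, Theorem~\ref{main-1}(3) to handle all large $n$ uniformly with finitely many small $n$ treated by compactness, linearity of $H(Z_0|X_0)$ in $\vec p$, and the boundary estimates (\ref{order-n})--(\ref{order-infinite}) to force the maximizers into the interior.

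For Part 3 there is a minor but genuine difference. The paper adds two second-order Taylor expansions---one of $I$ about $\vec p_\infty$ evaluated at $\vec p_n$, one of $I_n$ about $\vec p_n$ evaluated at $\vec p_\infty$---so that the first-order terms drop by interiority and only the $\ell=0$ case of Theorem~\ref{main-1}(3), namely $|I_n-I|\le\rho_0^n$, is needed to bound the resulting quadratic form $(\vec p_n-\vec p_\infty)^T\bigl(D_{\vec p}^2 I(\eta_1)+D_{\vec p}^2 I_n(\eta_2)\bigr)(\vec p_n-\vec p_\infty)$. You instead invoke the $\ell=1$ case to bound $|D_{\vec p}I(\vec p_n)|$ directly and then invert the uniformly negative-definite Hessian via the mean-value theorem. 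Both arguments are short and correct; the paper's version has the small economy of needing only convergence of values rather than gradients, while yours makes the constant in the final bound more transparent.
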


\begin{proof}

{\em Part 1:}
Recall that
$$
H(Z(\vec{p}, \eps))=G(\vec{p}, \eps)+\hat{F}(\vec{p}, \eps) (\eps \log \eps).
$$
By part 1 of Theorem~\ref{main-1}, for some  $\eps_0 >0, \delta_0>0$ , $G(\vec{p}, \eps)$ and $\hat{F}(\vec{p}, \eps)$ are smooth on $U_{\delta_0, \eps_0}$, and so
$$
\lim_{\eps \rightarrow 0} D_{\vec{p}}^2 G(\vec{p}, \eps) = D_{\vec{p}}^2 G(\vec{p}, 0)
$$
and
$$
\lim_{\eps \rightarrow 0} D_{\vec{p}}^2 \hat{F}(\vec{p}, \eps) = D_{\vec{p}}^2 \hat{F}(\vec{p}, 0),
$$
uniformly on $\vec{p} \in \mathcal{M}_{\delta_0}$. Thus,
$$
\lim_{\eps \rightarrow 0} D_{\vec{p}}^2 H(Z(\vec{p}, \eps)) = D_{\vec{p}}^2 G(\vec{p},0) = D_{\vec{p}}^2 H(Z(\vec{p},0)),
$$
again uniformly on $\mathcal{M}_{\delta_0}$. Since $D_{\vec{p}}^2 H(Z(\vec{p}, 0))$ is negative definite on $\mathcal{M}_{\delta_0}$ (see~\cite{hm06a}), it follows that for sufficiently small $\eps$, $D_{\vec{p}}^2 H(Z(\vec{p},\eps))$ is also negative definite on $\mathcal{M}_{\delta_0}$, and thus $H(Z(\vec{p},\eps))$ is also strictly concave on $\mathcal{M}_{\delta_0}$.

Since for all $\eps \ge 0$, $H(Z_0|X_0)$ is a linear function of $\vec{p}$, $I(Z(\vec{p},\eps); X(\vec{p}))$ is strictly concave on $\mathcal{M}_{\delta_0}$. This establishes part 1 for $I(Z(\vec{p}, \eps);X(\vec{p}))$. By part 2 of Theorem~\ref{main-1}, for sufficiently large $n$ ($n \ge N_1$),  we  obtain the same result (with the same $\eps_0$ and $\delta_0$) for $I_n(Z(\vec{p}, \eps);X(\vec{p}))$.  For each $1 \le n < N_1$, one can easily establish strict concavity
on $U_{\delta_n, \eps_n}$ for some $\delta_n, \eps_n >0$.

{\em Part 2:} This follows from part 1 and statements (\ref{order-n}) and (\ref{order-infinite}).

{\em Part 3:} For notational simplicity, for fixed $0 \le \eps \le \eps_0$, we rewrite $I(Z(\vec{p}, \eps); X(\vec{p})), I_n(Z(\vec{p}, \eps); X(\vec{p}))$ as function $f(\vec{p}), f_n(\vec{p})$, respectively.
By the Taylor formula with remainder, there exist $\eta_1, \eta_2 \in \mathcal{M}_{\delta_0}$ such that
$$
f(\vec{p}_n(\eps))=f(\vec{p}_{\infty}(\eps))+D_{\vec{p}}f(\vec{p}_{\infty}(\eps)) (\vec{p}_n(\eps)-\vec{p}_{\infty}(\eps))
$$
\begin{equation} \label{PANASH-1}
+(\vec{p}_n(\eps)-\vec{p}_{\infty}(\eps))^T D_{\vec{p}}^2 f(\eta_1)(\vec{p}_n(\eps)-\vec{p}_{\infty}(\eps)),
\end{equation}
$$
f_n(\vec{p}_{\infty}(\eps))=f_n(\vec{p}_{n}(\eps))+D_{\vec{p}}f_n(\vec{p}_{n}(\eps)) (\vec{p}_{\infty}(\eps)-\vec{p}_{n}(\eps))
$$
\begin{equation} \label{PANASH-2}
+(\vec{p}_n(\eps)-\vec{p}_{\infty}(\eps))^T D_{\vec{p}}^2 f_n(\eta_2)(\vec{p}_n(\eps)-\vec{p}_{\infty}(\eps)),
\end{equation}
here the superscript $T$ denotes the transpose.

By part 2 of Theorem~\ref{Concavity}
\begin{equation} \label{zero}
D_{\vec{p}}f(\vec{p}_\infty(\eps)) =0, \;\; D_{\vec{p}}f_n(\vec{p}_n(\eps)) =0.
\end{equation}
By part 2 of Theorem~\ref{main-1}, with $\ell =0$, there exists $0 < \rho_0 < 1$ such that
\begin{equation} \label{rhon}
\hspace{-0.5cm} \left|f(\vec{p}_\infty(\eps)) - f_n(\vec{p}_\infty(\eps)) \right| \le \rho_0^n,  \left|f(\vec{p}_n(\eps)) - f_n(\vec{p}_n(\eps)) \right| \le \rho_0^n.
\end{equation}
Combining (\ref{PANASH-1}), (\ref{PANASH-2}), (\ref{zero}), (\ref{rhon}), we have
$$
|(\vec{p}_n(\eps)-\vec{p}_{\infty}(\eps))^T (D_{\vec{p}}^2 f(\eta_1)+D_{\vec{p}}^2 f_n(\eta_2))(\vec{p}_n(\eps)-\vec{p}_{\infty}(\eps)) |\leq 2 \rho_0^n.
$$
Since $f$ and $f_n$ are strictly concave on $\mathcal{M}_{\delta_0}$, $D_{\vec{p}}^2 f(\eta_1), D_{\vec{p}}^2 f_n(\eta_2)$ are both negative definite. Thus there exists some positive constant $K$ such that
$$
K |\vec{p}_{n}(\eps)-\vec{p}_{\infty}(\eps)|^2 \leq 2\rho_0^n.
$$
This, together with part $1$ of Lemma~\ref{complex}, implies the existence of $\rho$.

\end{proof}

\begin{exmp}
Consider Example~\ref{rll}. For sufficiently small $\eps$ and $p$ bounded away from 0 and 1, part $1$ of Theorem~\ref{main-1} gives an expression for $H(Z(\vec{p}, \eps))$ and part $1$ of Theorem~\ref{Concavity} shows that $I(Z(\vec{p}, \eps))$ is strictly concave and thus has negative second derivative.  In this case, the
results boil down to the strict concavity of the binary entropy function; that is, when
$\eps=0$, $H(Z)=H(X)=-p \log p - (1-p) \log (1-p)$, and one computes with the second derivative with respect to $p$
$$
H''(Z)|_{\eps=0}=-\frac{1}{p}-\frac{1}{1-p} \leq -2.
$$
So, there is $\eps_0$ such that whenever $0 \le \eps \leq \eps_0$, $H''(Z) < 0$.
\end{exmp}


\begin{thebibliography}{1}

\bibitem{co96}
G.~Constantine and T.~Savits.
\newblock A Multivariate Faa Di Bruno
Formula With Applications. \newblock {\em Transactions of the
American Mathematical Society}, Vol. 348, No. 2., Feb, 1996, pp.
503-520.

\bibitem{gm05}
G.~Han and B.~Marcus.
\newblock Analyticity of entropy rate of hidden {M}arkov chains.
\newblock {\em IEEE Transactions on Information Theory},
Volume 52, Issue 12, December, 2006, pages: 5251-5266.

\bibitem{hm06a}
G.~Han and B.~Marcus.
\newblock Asymptotics of Input-Constrained Binary Symmetric Channel Capacity.
\newblock {\it Annals of Applied Probability}, Vol. 19, No. 3, June, 2009, Page(s):1063-1091.

\bibitem{hm08}
G.~Han and B.~Marcus
\newblock Asymptotics of Entropy Rate in Special Families of Hidden Markov Chains.
\newblock {\it IEEE Transactions on Information Theory}, Volume 56,  Issue 3,  March 2010, Page(s):1287-1295.


\bibitem{jss04}
P.~Jacquet, G.~Seroussi, and W.~Szpankowski. \newblock On the
Entropy of a Hidden Markov Process (extended abstract). \newblock
{\em Data Compression Conference}, 362--371, Snowbird, 2004.

\bibitem{le96}
R.~Leipnik and T.~Reid. \newblock Multivariable Faa di Bruno
Formulas. \newblock {\em Electronic Proceedings of the Ninth Annual
International Conference on Technology in Collegiate Mathematics},
http://archives.math.utk.edu/ICTCM/EP-9.html\#C23.

\bibitem{lm95}
D.~Lind and B.~Marcus.
\newblock {\em An Introduction to Symbolic Dynamics and Coding}.
\newblock Cambridge University Press, 1995.

\bibitem{mrs98}
B.~Marcus, R.~Roth and P.~Siegel. \newblock Constrained Systems and
Coding for Recording Channels. \newblock Chap. 20 in {\em Handbook
of Coding Theory} (eds. V. S. Pless and W. C. Huffman), Elsevier
Science, 1998.

\bibitem{or04}
E.~Ordentlich and T.~Weissman.
\newblock New bounds on the entropy rate of hidden {M}arkov process.
\newblock {\em IEEE Information Theory Workshop}, San Antonio, Texas, 24-29 Oct. 2004, Page(s):117
- 122.


\bibitem{pa64}
W.~Parry.
\newblock Intrinsic Markov chains.
\newblock {\em Trans. Amer. Math. Soc.} 112 (1964), 55-66.

\bibitem{pa04}
P.~Vontobel, A.~Kavcic, D.~Arnold and Hans-Andrea Loeliger.
\newblock A Generalization of the Blahut-Arimoto Algorithm to Finite-State
Channels.
\newblock {\em IEEE Trans. Inform. Theory}, vol. 54, no. 5, pp. 1887--1918, May, 2008.

\bibitem{ZehaviWolf88}
E.~Zehavi and J.~Wolf.
\newblock  On runlength codes.
\newblock {\em IEEE Trans. Information Theory}, \textbf{34}, 45--54, 1988.

\end{thebibliography}
\end{document}